\newcommand{\bisim}[1]{\ensuremath{\rightleftharpoons}_{#1}}
\newcommand{\md}[1]{\ensuremath{\mathop{md}}(#1)}
\newcommand{\set}[1]{\ensuremath\left\{#1\right\}}
\newcommand{\ddfn}{\mathrel{\mathop{{\mathop:}{\mathop:}}}=}
\newcommand{\dep}[1]{\mathord{=}\!\left(#1\right)}
\newcommand{\FO}{\ensuremath{\mathsf{FO}}\xspace}
\newcommand{\D}{\ensuremath{\mathsf{D}}\xspace}
\newcommand{\MTL}{\ensuremath{\mathsf{MTL}}\xspace}
\newcommand{\MLFO}{\ensuremath{\mathsf{ML}^{\mathsf{FO}}}\xspace}
\newcommand{\EMINCL}{\ensuremath{\mathsf{EMINCL}}\xspace}
\newcommand{\EMINCLC}{\ensuremath{\mathsf{EMINCL}(\ovee)}\xspace}
\newcommand{\MDL}{\ensuremath{\mathsf{MDL}}\xspace}
\newcommand{\ML}{\ensuremath{\mathsf{ML}}\xspace}
\newcommand{\MIL}{\ensuremath{\mathsf{MIL}}\xspace}
\newcommand{\EMDL}{\ensuremath{\mathsf{EMDL}}\xspace}
\newcommand{\EMIL}{\ensuremath{\mathsf{EMIL}}\xspace}
\newcommand{\EMILC}{\ensuremath{\mathsf{EMIL}(\ovee)}\xspace}
\newcommand{\card}[1]{\left| #1 \right|}
\newcommand{\mathtext}[1]{\ensuremath{\mathrm{\text{#1}}}}
\newcommand{\nmodels}{\ensuremath{\not\models}}
\newcommand{\existsWorld}{\ensuremath{\mathsf E}}
\newcommand{\propVar}{\ensuremath{X}}
\newcommand{\foStructure}[2]{\ensuremath{\mathcal M^{\mathtext{FO}}_{#1,#2}}}
\newcommand{\mDiss}{\ensuremath{M^{\mathtext{DISS}}}}
\newcommand{\hintikkaFormulasLevel}[1]{\ensuremath{\Phi^{\bisim #1}}}
\newcommand{\hintikkaPointedModel}[3]{\ensuremath{\phi^{#3}_{#1,#2}}}
\newcommand{\hintikkaSetOfPointedModelsInTeam}[3]{\ensuremath{\Phi^{\bisim #3}_{#1,#2}}}
\newcommand{\hintikkaModelWithTeam}[3]{\ensuremath{\varphi^{\bisim #3}_{#1,#2}}}
\newcommand{\classNegation}{\ensuremath{\mathord{\sim}}}
\theoremstyle{plain}
\newtheorem{proposition}[theorem]{Proposition}
\title{A Van Benthem Theorem for Modal Team Semantics\footnote{The first author was supported by the Academy of Finland grants 264917 and 275241.}}
\author[1]{Juha Kontinen}
\author[2]{Julian-Steffen M\"uller}
\author[3]{Henning Schnoor}
\author[2]{Heribert Vollmer}
\affil[1]{University of Helsinki, Department of Mathematics and Statistics, P.O. Box 68, 00014 Helsinki, Finland}
\affil[2]{Leibniz Universit\"at Hannover, Institut f\"ur Theoretische Informatik, Appelstr.~4, 30167 Hannover, Germany}
\affil[3]{Institut f\"ur Informatik, Christian-Albrechts-Universit\"{a}t zu Kiel, 24098 Kiel, Germany}
\authorrunning{J.\,Kontinen and J.\,S.\,M\"uller and H.\,Schnoor and H.\,Vollmer}
\newcommand{\neighborhood}[3]{\ensuremath{\mathrm{N}^{#2}_{#3}(#1)}}
\begin{document}

 \maketitle

\begin{abstract}
The famous van Benthem theorem states that modal logic corresponds  exactly to the fragment of first-order logic that is invariant under bisimulation.  In this article we prove an exact analogue of this theorem in the framework of modal dependence logic \MDL and team  semantics. We show that modal team logic \MTL, extending  \MDL by classical negation, captures  exactly the $\FO$-definable  bisimulation invariant properties of Kripke structures and teams. We also compare the expressive power of \MTL to most of the variants and extensions of \MDL recently studied in the area.

\keywords{modal logic, dependence logic, team semantics, expressivity, bisimulation, independence, inclusion, generalized dependence atom}
\end{abstract}

\section{Introduction}

The concepts of dependence and independence are ubiquitous in many scientific disciplines such as experimental physics, social choice theory, computer science, and cryptography.  Dependence logic \D  \cite{vaananen07} and its so-called team semantics have given rise to a new logical framework in which various notions of dependence and independence can be formalized and studied. Dependence logic extends first-order logic by  dependence atoms
\begin{equation}\label{depatom}
\dep{x_{1},\dots,x_{n-1},x_{n}},
\end{equation}
expressing that the value of the variable $x_{n}$  is functionally dependent on the values of $x_1, \ldots, x_{n-1}$. The formulas of  dependence logic  are evaluated over  \emph{teams}, i.\,e., sets of assignments, and not over single assignments as in first-order logic.

In  \cite{va09}  a modal variant of dependence logic \MDL was introduced. In the modal framework teams are sets of worlds, and a dependence atom 
\begin{equation}\label{modaldepatom}
\dep{p_{1},\dots,p_{n-1},p_{n}}
\end{equation} holds in a team $T$ if there is a Boolean function that determines the value of the propositional variable $p_{n}$ from those of $p_{1},\dots,p_{n-1}$ in all worlds in $T$.   One of the fundamental properties of \MDL (and of dependence logic) is that  its formulas satisfy the so-called downwards closure property:  if $M,T\models\varphi$, and $T'\subseteq T$, then  $M,T'\models\varphi$. Still, the modal framework is very different from the first-order one, e.g.,   dependence atoms between propositional variables can be eliminated with the help of the  classical disjunction $\ovee$  \cite{va09}.  On the other hand, it was recently shown that eliminating  dependence atoms using disjunction causes an exponential blow-up in the formula size, that is,  any formula of $\ML(\ovee)$ logically equivalent to the atom in  \eqref{modaldepatom} is bound to have length exponential in $n$  \cite{DBLP:conf/aiml/HellaLSV14}. The central complexity theoretic questions  regarding \MDL have been solved in \cite{se09,DBLP:journals/sLogica/LohmannV13,eblo12,mu12}.

Extended modal dependence logic, \EMDL, was  introduced in \cite{DBLP:conf/wollic/EbbingHMMVV13}. This extension is defined simply by allowing \ML formulas to appear inside dependence atoms, instead of only propositions. \EMDL can be seen as the first  step towards combining dependencies with temporal reasoning. 
  \EMDL is   strictly more expressive than \MDL but its formulas still have the downwards closure property. In fact,  \EMDL has recently been shown to be  equivalent to the logic $\ML(\ovee)$ \cite{DBLP:conf/wollic/EbbingHMMVV13,DBLP:conf/aiml/HellaLSV14}.
 
In the first-order case, several interesting variants of the dependence atoms have been introduced and studied. The focus has been on independence atoms 
$$(x_1, \dots, x_\ell) \bot_{(y_1, \dots ,y_m)} (z_1, \dots, z_n),$$
and inclusion atoms  
$$(x_1, \dots, x_\ell) \subseteq {(y_1, \dots ,y_\ell)},  $$
which were  introduced in \cite{DBLP:journals/sLogica/GradelV13} and \cite{DBLP:journals/apal/Galliani12}, respectively. 
The intuitive meaning of the independence atom  is that the variables $x_1, \dots, x_\ell$  and $z_1, \dots, z_n$ are independent of each other for any fixed value of  $y_1, \dots ,y_m$, whereas the inclusion atom declares that all values of the tuple $(x_1, \dots, x_\ell)$ appear also as values of $(y_1, \dots ,y_\ell)$.
In \cite{DBLP:conf/aiml/KontinenMSV14} a modal variant, \MIL, of independence logic was introduced. The logic \MIL contains \MDL as a proper sublogic, in particular, its formulas do not in general have the downwards closure property. In  \cite{DBLP:conf/aiml/KontinenMSV14} it was also noted that all \MIL formulas are invariant under  bisimulation when this notion is lifted from single worlds to a relation between sets of words in a natural way.  At the same time (independently) in \cite{DBLP:conf/aiml/HellaLSV14} it was shown that \EMDL and $\ML(\ovee)$ can express  exactly those  properties of Kripke structures and teams that are downwards closed and invariant under $k$-bisimulation for some $k\in \mathbb{N}$.

A famous theorem by Johan van Benthem \cite{vanbenthPHD,BenthemBook} states that modal logic is exactly the fragment of first-order logic that is invariant under  (full) bisimulation. In this paper we study the analogues of this theorem  in the context of team semantics. Our main result shows that an analogue of the van Benthem theorem for team semantics can be  obtained by replacing \ML by \emph{Modal Team Logic} (\MTL). \MTL was introduced in~\cite{JulianThesis} and extends \ML (and \MDL) by classical negation $\classNegation$. More precisely, we show that for any  team property $P$  the following are equivalent:
 \begin{enumerate}[(i)]
  \item\label{enum:main theorem:MTL expressible} There is an \MTL-formula which expresses $P$,
  \item\label{enum:main theorem:FO expressible and bisimulation invariant} there is a first-order formula which expresses $P$ and $P$ is bisimulation-invariant,
  \item\label{enum:main theorem:bounded bisimulation invariant} $P$ is invariant under $k$-bisimulation for some $k$,  
  \item\label{enum:main theorem:bisimulation invariant and local} $P$ is bisimulation-invariant and local.
 \end{enumerate}

 We also study whether all bisimulation invariant properties can be captured by natural variants of \EMDL. We consider extended modal independence and extended modal inclusion logic (\EMIL and \EMINCL, respectively), which are obtained from \EMDL by replacing the dependence atom with the independence (resp.~inclusion) atom. We show that both of these logics fail to capture all bisimulation invariant properties, and therefore in particular are strictly weaker than \MTL. On the other hand, we show that $\EMINCL(\ovee)$ (\EMINCL extended with classical disjunction) is in fact as expressive as $\MTL$, but  the analogously defined $\EMILC$ is strictly weaker. Finally, we show that the extension $\MLFO$ of \ML by all first-order definable generalized dependence atoms (see \cite{DBLP:conf/aiml/KontinenMSV14}) gives rise to a logic that is as well equivalent to \MTL.

\section{Preliminaries}

A \emph{Kripke model} is a tuple $M=(W,R,\pi)$ where $W$ is a nonempty set of worlds, $R\subseteq W\times W$, and $\pi\colon P\rightarrow 2^W$, where $P$ is the set of propositional variables. A \emph{team} of a model $M$ as above is simply a set $T\subseteq W$. The central basic concept underlying V\"a\"an\"anen's modal dependence logic and all its variants is that modal formulas are evaluated not in a world but in a team. This is made precise in the following definitions.
We first recall the usual syntax of modal logic \ML:
$$
\varphi \ddfn p\mid \neg p
\mid (\varphi \wedge \varphi) \mid (\varphi \vee \varphi) 
\mid \Diamond \varphi \mid \Box \varphi,
$$
where $p$ is a  propositional variable. Note that we consider only formulas in negation normal form, i.e., negation appears only in front of atoms. As will become clear from the definition of team semantics of \ML, that we present next, $p$ and $\neg p$ are not dual formulas, consequently \emph{tertium non datur} does not hold in the sense that it is possible that $M,T\nmodels p$ and $M,T\nmodels\neg p$ (however, we still have that $M,T\models p\vee\neg p$ for all models $M$ and teams $T$).
It is worth noting that in  \cite{va09}, the connective $\neg$ is allowed to appear freely in \MDL formulas (with semantics generalizing the atomic negation case of Definition~\ref{def:semanticsML} below, note that classical negation as allowed in \MTL is not allowed in \MDL). The well-known dualities from classical modal logic are also true for  \MDL formulas hence any \ML-formula (even \MDL) can be rewritten in  such a way that $\neg$ only appears in front of propositional variables.
 
\begin{definition}\label{def:semanticsML}
Let $M=(W,R,\pi)$ be a Kripke model, let $T\subseteq W$ be a team, and let $\varphi$ be an \ML-formula. We define when $M,T\models\varphi$ holds inductively:
  \begin{itemize}
  \item If $\varphi=p$, then $M,T\models\varphi$ iff $T\subseteq\pi(p)$,
  \item If $\varphi=\neg p$, then $M,T\models\varphi$ iff $T\cap\pi(p)=\emptyset$,
  \item If $\varphi=\psi\vee\chi$ for some formulas $\psi$ and $\chi$, then $M,T\models\varphi$ iff $T=T_1\cup T_2$ with $M,T_1\models\psi$ and $M,T_2\models\chi$,
  \item If $\varphi=\psi\wedge\chi$ for some formulas $\psi$ and $\chi$, then $M,T\models\varphi$ iff $M,T\models\psi$ and $M,T\models\chi$,
  \item If $\varphi=\Diamond\psi$ for some formula $\psi$, then $M,T\models\varphi$ iff there is some team $T'$ of $M$ such that $M,T'\models\psi$ and 
   \begin{inparaenum}
    \item for each $w\in T$, there is some $w'\in T'$ with $(w,w')\in R$, and 
    \item for each $w'\in T'$, there is some $w\in T$ with $(w,w')\in R$.
   \end{inparaenum}
  \item If $\varphi=\Box\psi$ for some formula $\psi$, then $M,T\models\varphi$ iff $M,T'\models\psi$, where $T'$ is the set $\set{w'\in M\ \vert\ (w,w')\in R\mathtext{ for some }w\in T}$.
 \end{itemize}
 \end{definition}

Analogously to the first-order setting,  $\ML$-formulas satisfy  the following \emph{flatness} property~\cite{vaananen07}. Here, the notation $M,w\models\varphi$ in item~\ref{enum:flatness individual worlds} refers to the standard semantics of  modal logic (without teams).
\begin{proposition}\label{proposition:flatness}
 Let $M$ be a Kripke model and $T$ a team of $M$. Let $\varphi$ be an $\ML$-formula. Then the following are equivalent:
\begin{enumerate}
\item $M,T\models\varphi$,
\item $M,\set w\models\varphi$ for each $w\in T$,
\item\label{enum:flatness individual worlds} $M,w\models\varphi$ for each $w\in T$.
\end{enumerate}
\end{proposition}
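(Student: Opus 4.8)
The plan is to prove the equivalence of~(1) and~(3) by induction on the structure of $\varphi$, and to recover~(2) as the special case of singleton teams: instantiating the equivalence (1)$\Leftrightarrow$(3) at $T=\set w$ yields $M,\set w\models\varphi$ iff $M,w\models\varphi$, so that~(2) and~(3) say the same thing world by world, giving the three-way equivalence. The two base cases are immediate from the atomic clauses of Definition~\ref{def:semanticsML}: $M,T\models p$ means $T\subseteq\pi(p)$, which holds exactly when $w\in\pi(p)$ for every $w\in T$, i.e.\ $M,w\models p$ for every $w\in T$; and $M,T\models\neg p$ means $T\cap\pi(p)=\emptyset$, i.e.\ $w\notin\pi(p)$ for every $w\in T$. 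I also record that, since the universal condition in~(3) is vacuous on the empty team, the induction hypothesis makes every formula hold on $\emptyset$; this is what keeps the quantification ``for each $w\in T$'' well-behaved in the inductive steps.

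The conjunction case is routine, as both the team-semantic clause and the pointwise statement distribute over $\wedge$ and the induction hypothesis applies to each conjunct. The first genuinely interesting case is disjunction $\varphi=\psi\vee\chi$. For the direction from~(1) to~(3), a witnessing split $T=T_1\cup T_2$ with $M,T_1\models\psi$ and $M,T_2\models\chi$ gives, by the induction hypothesis, $M,w\models\psi$ for $w\in T_1$ and $M,w\models\chi$ for $w\in T_2$, so every $w\in T$ satisfies one of the disjuncts. For the converse I would define the split explicitly by $T_1=\set{w\in T\mid M,w\models\psi}$ and $T_2=\set{w\in T\mid M,w\models\chi}$; then $T=T_1\cup T_2$ by the pointwise assumption, and the induction hypothesis turns the pointwise satisfaction on $T_1$ and $T_2$ back into $M,T_1\models\psi$ and $M,T_2\models\chi$, so that $M,T\models\psi\vee\chi$.

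The modal cases are where the back-and-forth shape of the team semantics must be matched against the pointwise quantifier, and the diamond is the main obstacle. For $\varphi=\Diamond\psi$, if~(3) holds then for each $w\in T$ I pick a successor $w'$ with $M,w'\models\psi$ and collect these into a team $T'$; by construction every $w\in T$ has a successor in $T'$ and every element of $T'$ has a predecessor in $T$, so both back-and-forth conditions hold, and the induction hypothesis gives $M,T'\models\psi$, whence $M,T\models\Diamond\psi$. Conversely, from a witness team $T'$ the induction hypothesis yields $M,w'\models\psi$ for every $w'\in T'$, and condition~(a) then supplies, for each $w\in T$, a successor in $T'$ satisfying $\psi$, i.e.\ $M,w\models\Diamond\psi$. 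The box case $\varphi=\Box\psi$ is comparatively direct, because the witness team is forced to be the full image $T'=\set{w'\mid (w,w')\in R\text{ for some }w\in T}$: the induction hypothesis converts $M,T'\models\psi$ into ``every successor of every $w\in T$ satisfies $\psi$'', which is exactly the pointwise reading of $\Box$ on each world of $T$, and the two directions are then symmetric. Completing these cases closes the induction and, together with the singleton specialization, establishes the three-way equivalence.
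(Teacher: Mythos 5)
Your argument is correct: the structural induction establishing (1)$\Leftrightarrow$(3), with the disjunction split taken as the pointwise satisfaction sets and the diamond witness team assembled from chosen successors, is the standard flatness proof, and the singleton specialization does yield (2). The paper itself states Proposition~\ref{proposition:flatness} without proof, citing it as a known property from the dependence-logic literature, so there is no in-paper argument to compare against; your proof fills that gap in essentially the canonical way, including the correct observation that the empty team satisfies every $\ML$-formula, which is needed for the induction to go through.
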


 \emph{Modal team logic} extends \ML by a second type of negation, denoted by $\classNegation$, and interpreted just as classical negation. The syntax is formally given as follows:
$$
\varphi \ddfn p\mid \neg p \mid \classNegation\varphi
\mid (\varphi \wedge \varphi) \mid (\varphi \vee \varphi) 
\mid \Diamond \varphi \mid \Box \varphi,
$$
where $p$ is a  propositional variable. 
The semantics of \MTL is defined by extending Def.~\ref{def:semanticsML} by the following clause:
 \begin{itemize}
  \item If $\varphi=\classNegation\psi$ for some formula $\psi$, then $M,T\models\varphi$ iff  $M,T\nmodels\psi$.
 \end{itemize}
We note that usually (see \cite{JulianThesis}), \MTL also contains dependence atoms; however since these atoms can be expressed in \MTL  we omit them  in the syntax (see Proposition \ref{defatoms} below).  The classical disjunction $\ovee$ (in some other context also referred to as \emph{intuitionistic disjunction}) is also readily expressed in \MTL: 
 $\varphi\ovee\psi$ is logically equivalent to $\classNegation(\classNegation\varphi\wedge\classNegation\psi)$.

For an \ML formula $\varphi$, we let $\varphi^{dual}$ denote the formula that is obtained by transforming $\neg\varphi $ to negation normal form. Now by Proposition \ref{proposition:flatness}  it follows that 
\[ M,T\models\varphi^{dual} \textrm{ iff } M,w\not \models \varphi \textrm{ for all }w\in T, \]
hence 
 $M,T\models\classNegation\psi^{dual}$ if and only if there is some $w\in T$ with $M, w\models\psi$. We therefore often write $\existsWorld\psi$ instead of $\classNegation\psi^{dual}$.
 Note that $\existsWorld$ is not a global operator stating existence of a world anywhere in the model, but $\existsWorld$ is evaluated in the current team. It is easy to see (and follows from Proposition~\ref{proposition:team bisimulation implies equivalence}) that a global ``exists'' operator cannot be expressed in \MTL.

The next proposition shows that dependence atoms can be easily expressed in \MTL.

\begin{proposition}\label{defatoms}
The dependence atom  \eqref{modaldepatom} can be expressed in \MTL by a formula that has length polynomial in $n$. 
\end{proposition}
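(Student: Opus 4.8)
The plan is to rewrite the dependence atom's semantics as the \emph{absence} of a small witness for failure, and then negate it classically. Recall that $\dep{p_1,\dots,p_{n-1},p_n}$ holds in $T$ exactly when any two worlds of $T$ that agree on $p_1,\dots,p_{n-1}$ also agree on $p_n$; equivalently, the atom \emph{fails} on $T$ precisely when there are two worlds $w,w'\in T$ agreeing on $p_1,\dots,p_{n-1}$ but disagreeing on $p_n$. The first step is to repackage this two-world failure condition as the existence of a \emph{nonempty subteam}: the atom fails on $T$ if and only if there is a nonempty $S\subseteq T$ on which each $p_i$ (for $i<n$) is constant while $p_n$ takes both truth values. (Forward: a bad pair $\set{w,w'}$ is such an $S$. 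Backward: any such $S$ contains a $p_n$-true and a $p_n$-false world, which agree on $p_1,\dots,p_{n-1}$ and thus form a bad pair.)

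Next I would express each ingredient in \MTL and select the subteam by team-splitting. Constancy of $p_i$ on the current team is captured by $p_i\ovee\neg p_i$, and ``$p_n$ attains both values'' is $\existsWorld p_n\wedge\existsWorld\neg p_n$. Setting
$$\beta \;:=\; \Bigl(\bigwedge_{i=1}^{n-1}\bigl(p_i\ovee\neg p_i\bigr)\Bigr)\wedge\existsWorld p_n\wedge\existsWorld\neg p_n,$$
the formula $\beta$ asserts that the \emph{current} team is a failure witness. To quantify existentially over subteams I use the splitting disjunction against an always-true formula $\top:=p_1\vee\neg p_1$: by the semantics of $\vee$, taking the second disjunct to be all of $T$ shows that $M,T\models\beta\vee\top$ holds iff some $S\subseteq T$ satisfies $\beta$. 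The proposed translation is therefore
$$\dep{p_1,\dots,p_{n-1},p_n}\;\equiv\;\classNegation\bigl(\beta\vee\top\bigr).$$

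Finally I would verify correctness and size. The empty subteam cannot satisfy $\beta$, since $\existsWorld p_n$ fails on $\emptyset$; hence $\beta\vee\top$ detects exactly the nonempty failure witnesses from the first step, and its classical negation holds iff no such witness exists, i.e.\ iff the atom holds. For the length bound, $\beta$ is a conjunction of $n-1$ constant-size formulas together with two constant-size $\existsWorld$-conjuncts (recall that $\ovee$ and $\existsWorld$ each abbreviate fixed-size combinations of $\classNegation$ and $\wedge$), so the whole formula has size $O(n)$. The point to get right---and the reason a naive translation would be exponential, just as for $\ML(\ovee)$---is that one must \emph{not} enumerate the $2^{n-1}$ assignments to $p_1,\dots,p_{n-1}$; instead the team-splitting disjunction guesses a single unnamed pattern and $\existsWorld$ detects the clash on $p_n$. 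This essential use of classical negation (via $\existsWorld$ and $\ovee$) is exactly what buys the polynomial bound over classical disjunction alone.
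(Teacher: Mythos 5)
Your construction is correct and is essentially the paper's proof in unfolded form: the paper rewrites the atom as $(\bigwedge_{i<n}\dep{p_i})\rightarrow\dep{p_n}$ and renders the intuitionistic implication as $(\classNegation\varphi\ovee\psi)\otimes\bot$, which after pushing the classical negations inward is precisely your $\classNegation(\beta\vee\top)$ (with $\classNegation\bot$ in place of $\top$, an immaterial difference since both admit every nonempty team as the second disjunct). Both arguments rest on the same two devices---quantification over subteams via a splitjunction against a trivial disjunct, and constancy of each $p_i$ via $p_i\ovee\neg p_i$---so the approaches coincide.
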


\begin{proof} Note first that, analogously to the first-order case \cite{DBLP:journals/synthese/AbramskyV09}, \eqref{modaldepatom} is logically equivalent with 
\[ (\bigwedge_{1\le i\le n-1} \dep{p_i} )  \rightarrow \dep{p_n},      \]
where  $\rightarrow $ is the so-called intuitionistic implication with the following semantics:
\[M,T\models \varphi \rightarrow \psi \textrm{ iff for all $T'\subseteq T$: if $M,T'\models \varphi$ then $M,T'\models \psi$}. \]
The connective $\rightarrow$ has a short logically equivalent definition in \MTL (see \cite{JulianThesis}): $\varphi\rightarrow\psi$ is equivalent to $(\sim\varphi\ovee\psi)\otimes\bot$, where $\otimes$ is the dual of  $\vee$, i.e., $\varphi\otimes\psi:= \classNegation(\classNegation\varphi\vee\classNegation\psi)$, and $\bot$ is a shorthand for the formula $p_0 \wedge \neg p_0$. Finally, $\dep{p_i}$ can be written as $p_i\ovee \neg p_i$, hence the claim follows.
\end{proof}

The intuitionistic implication used in the proof above has been studied in the modal team semantics context in \cite{Yangthesis}.

We now introduce the central concept of bisimulation~\cite{PARK-CONCURRENCY-AUTOMATA-INFINITE-SEQUENCES-TCS-1981,BenthemBook}. Intuitively, two \emph{pointed models} (i.e., pairs of models and worlds from the model) $(M_1,w_1)$ and $(M_2,w_2)$ are bisimilar, if they are indistinguishable from the point of view of modal logic. The notion of $k$-bisimilarity introduced below corresponds to indistinguishability by formulas with modal depth up to $k$: For a formula $\varphi$ in any of the logics considered in this paper, the \emph{modal depth} of $\varphi$, denoted with $\md\varphi$, is the maximal nesting degree of modal operators (i.e., $\Box$ and $\Diamond$) in $\varphi$.

\begin{definition}
 Let $M_1=(W_1,R_1,\pi_1)$ and $M_2=(W_2,R_2,\pi_2)$ be Kripke models. We define inductively what it means for states $w_1\in W_1$ and $w_2\in W_2$ to be $k$-bisimilar, for some $k\in\mathbb N$, written as $(M_1,w_1)\bisim k (M_2,w_2)$.
 
 \begin{itemize}
  \item $(M_1,w_1)\bisim 0(M_2,w_2)$ holds if for each propositional variable $p$, we have that $M_1,w_1\models p$ if and only if $M_2,w_2\models p$.
  \item $(M_1,w_1)\bisim{k+1}(M_2,w_2)$ holds if the following three conditions are satisfied:
  \begin{enumerate}
   \item $(M_1,w_1)\bisim{0}(M_2,w_2)$,
   \item for each successor $w_1'$ of $w_1$ in $M_1$, there is a successor $w_2'$ of $w_2$ in $M_2$ such that $(M_1,w_1')\bisim k(M_2,w_2')$ (\emph{forward} condition),
   \item for each successor $w_2'$ of $w_2$ in $M_2$, there is a successor $w_1'$ of $w_1$ in $M_1$ such that $(M_1,w_1')\bisim k(M_2,w_2')$ (\emph{backward} condition).
  \end{enumerate}
 \end{itemize}
\end{definition}

Full bisimilarity is defined similarly: Pointed models $(M_1,w_1)$ and $(M_2,w_2)$ are \emph{bisimilar} if there is a relation $Z\subseteq W_1\times W_2$ such that $(w_1,w_2)\in Z$, and for all $(w_1,w_2)\in Z$, we have that $w_1$ and $w_2$ satisfy the same propositional variables, and for each successor $w_1'$ of $w_1$ in $M_1$, there is a successor $w_2'$ of $w_2$ in $M_2$ with $(w_1',w_2')\in Z$ (forward condition), and analogously for each successor $w_2'$ of $w_2$ in $M_2$, there is a successor $w_1'$ of $w_1$ in $M_1$ with $(w_1',w_2')\in Z$ (back condition). In this case we simply say that $M_1,w_1$ and $M_2,w_2$ are \emph{bisimilar}. It is easy to see that bisimilarity implies $k$-bisimilarity for each $k$. 

\begin{definition}
 Let $M_1=(W_1,R_1,\pi_1)$ and $M_2=(W_2,R_2,\pi_2)$ be Kripke models, and let $w_1\in W_1$, $w_2\in W_2$. Then $(M_1,w_1)$ and $(M_2,w_2)$ are $k$-equivalent for some $k\in\mathbb N$, written $(M_1,w_1)\equiv_k (M_2,w_2)$ if for each modal formula $\varphi$ with $\md\varphi\leq k$, we have that $M_1,w_1\models\varphi$ if and only if $M_2,w_2\models\varphi$.
\end{definition}

Again, we simply write $w_1\equiv_{k} w_2$ if the models $M_1$ and $M_2$ are clear from the context. As mentioned above, $k$-bisimilarity and $k$-equivalence coincide. The following result is standard (see, e.g., \cite{BlackburnDeRijkeVenama-MODAL-LOGIC-BOOK-2001}):

\begin{proposition}\label{prop:bisimulation and equivalence}
 Let $M_1=(W_1,R_1,\pi_1)$ and $M_2=(W_2,R_2,\pi_2)$ be Kripke models, and let $w_1\in W_1$, $w_2\in W_2$. Then $(M_1,w_1)\bisim k (M_2,w_2)$ if and only if $(M_1,w_1)\equiv_k (M_2,w_2)$. 
\end{proposition}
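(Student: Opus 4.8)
The plan is to prove both implications by induction on $k$, since this is the standard bounded analogue of the Hennessy--Milner theorem. Throughout I use the single-world semantics $M,w\models\varphi$ of \ML, for which the De Morgan dualities (including $\neg\Diamond\psi\equiv\Box\neg\psi$) hold, so that every modal formula has a logically equivalent negation in negation normal form \emph{of the same modal depth}. This will let me freely negate a formula and so choose the direction in which a distinguishing formula is true.

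For the direction $(M_1,w_1)\bisim k(M_2,w_2)\Rightarrow(M_1,w_1)\equiv_k(M_2,w_2)$ I would argue by induction on $k$. In the base case $k=0$, the relation $\bisim 0$ guarantees agreement on all propositional variables, and a routine sub-induction on the structure of a modal-depth-$0$ formula (a Boolean combination of literals) lifts this to agreement on every such formula. For the step, assume the statement for $k$ and let $\md{\varphi}\le k+1$; a structural induction on $\varphi$ reduces the claim to the modal cases $\Diamond\psi$ and $\Box\psi$ with $\md{\psi}\le k$. If $M_1,w_1\models\Diamond\psi$, pick a witnessing successor $w_1'$; the forward condition of $\bisim{k+1}$ yields a successor $w_2'$ of $w_2$ with $(M_1,w_1')\bisim k(M_2,w_2')$, and the induction hypothesis gives $M_2,w_2'\models\psi$, hence $M_2,w_2\models\Diamond\psi$. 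The converse uses the backward condition symmetrically, and the $\Box$-case is dual.

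For the converse $(M_1,w_1)\equiv_k(M_2,w_2)\Rightarrow(M_1,w_1)\bisim k(M_2,w_2)$ I would prove the contrapositive by induction on $k$: if $\neg\bigl((M_1,w_1)\bisim k(M_2,w_2)\bigr)$ then there is a modal formula of modal depth at most $k$ true at $w_1$ but false at $w_2$. For $k=0$, failure of $\bisim 0$ means the worlds disagree on some proposition $p$, and $p$ or $\neg p$ is the desired depth-$0$ formula. For the step, if $\bisim{k+1}$ fails then either $\bisim 0$ already fails (handled by the base case, as $0\le k+1$), or the forward condition fails, i.e.\ there is a successor $w_1'$ of $w_1$ with $\neg\bigl((M_1,w_1')\bisim k(M_2,w_2')\bigr)$ for every successor $w_2'$ of $w_2$ (backward failure is symmetric). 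By the induction hypothesis, and negating if necessary (which preserves modal depth), for each such $w_2'$ there is a formula $\psi_{w_2'}$ with $\md{\psi_{w_2'}}\le k$ holding at $w_1'$ but failing at $w_2'$. Setting $\psi\ddfn\bigwedge_{w_2'}\psi_{w_2'}$ gives $M_1,w_1'\models\psi$ while no successor of $w_2$ satisfies $\psi$, so $\Diamond\psi$ has modal depth at most $k+1$, holds at $w_1$, and fails at $w_2$, as required.

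The main obstacle is making the conjunction $\psi$ in the last step well-defined, since a priori $w_2$ may have infinitely many successors. This is resolved by the standard finiteness fact that, over a finite set of propositional variables, there are only finitely many modal formulas of modal depth at most $k$ up to logical equivalence; hence the family $\set{\psi_{w_2'}}$ may be taken finite up to equivalence and the conjunction is a genuine \ML-formula. This finiteness-up-to-equivalence is precisely where the bounded modal depth is used, and it is the reason the statement refers to $\bisim k$ rather than to full bisimilarity.
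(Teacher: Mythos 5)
Your proof is correct, and it is the standard bounded Hennessy--Milner argument; the paper itself gives no proof of this proposition but simply cites it as a standard result from the modal-logic literature (Blackburn, de Rijke, Venema), where essentially this argument appears. Your handling of the two delicate points is right: negation-normal-form duals preserve modal depth under single-world semantics, so you may freely flip the direction of a distinguishing formula, and the conjunction in the contrapositive step is legitimate precisely because the paper restricts to a finite set of propositional variables, so that there are only finitely many formulas of modal depth at most $k$ up to equivalence. The only case you leave implicit is a successorless $w_2$ in the forward-failure step, which is covered by reading the empty conjunction as $\top$, so that $\Diamond\top$ separates the two worlds.
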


For \MTL and more generally logics with team semantics, the above notion of bisimulation can be lifted to teams. The following definition is a natural adaptation of $k$-bisimilarity to the team setting:

\begin{definition}
 Let $M_1=(W_1,R_1,\pi_1)$ and $M_2=(W_2,R_2,\pi_2)$ be Kripke models, let $T_1$ and $T_2$ be teams of $M_1$ and $M_2$. Then $(M_1,T_1)$ and $(M_2,T_2)$ are $k$-bisimilar, written as $M_1,T_1\bisim{k}M_2,T_2$ if the following holds:
 \begin{itemize}
  \item for each $w_1\in T_1$, there is some $w_2\in T_2$ such that $(M_1,w_1)\bisim{k} (M_2,w_2)$,
  \item for each $w_2\in T_2$, there is some $w_1\in T_1$ such that $(M_1,w_1)\bisim{k} (M_2,w_2)$.
 \end{itemize}
\end{definition}

Full bisimilarity on the team level is defined analogously. In this case we again say that $(M_1,T_1)$ and $(M_2,T_2)$ are bisimilar, and write $M_1,T_1\bisim{}M_2,T_2$, if there is a relation $Z\subseteq W_1\times W_2$ satisfying the forward and backward conditions as above, and which additionally satisfies that for each $w_1\in T_1$, there is some $w_2\in T_2$ with $(w_1,w_2)\in Z$, and for each $w_2\in T_2$, there is some $w_1\in T_1$ with $(w_1,w_2)\in Z$. This notion of team-bisimilarity was first introduced in~\cite{DBLP:conf/aiml/KontinenMSV14} and \cite{DBLP:conf/aiml/HellaLSV14}. The following result is easily proved by induction on the formula length:

\begin{restatable}{proposition}{propositionteambisimulationimpliesequivalence}\label{proposition:team bisimulation implies equivalence}
 Let $M_1$ and $M_2$ be Kripke models, let $T_1$ and $T_2$ be teams of $M_1$ and of $M_2$. Then 
 \begin{enumerate}
  \item If $(M_1,T_1)\bisim k(M_2,T_2)$, then for each formula $\varphi\in\MTL$ with $\md\varphi\leq k$, we have that $M_1,T_1\models\varphi$ if and only if $M_2,T_2\models\varphi$.
  \item If $(M_1,T_1)\bisim{}(M_2,T_2)$, then for each formula $\varphi\in\MTL$, we have that $M_1,T_1\models\varphi$ if and only if $M_2,T_2\models\varphi$.
 \end{enumerate}
\end{restatable}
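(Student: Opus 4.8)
The plan is to prove both statements simultaneously by induction on the structure of the \MTL-formula $\varphi$. The natural induction hypothesis should combine the two parts so that the inductive step for $\classNegation$ works correctly: for part~1 I would actually prove the stronger claim that $k$-bisimilarity of $(M_1,T_1)$ and $(M_2,T_2)$ implies equivalence with respect to \emph{all} \MTL-formulas of modal depth at most $k$, and I would set up the induction so that this statement for depth $k$ can be invoked for the subformulas. Part~2 is the analogous statement for full bisimilarity; since the two proofs differ only in bookkeeping of the bisimilarity degree, I would present part~1 in detail and indicate that part~2 follows by replacing each occurrence of $\bisim{k}$ by $\bisim{}$ and dropping the depth bookkeeping.

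First I would handle the atomic cases. For $\varphi = p$, I use that $M_1,T_1\models p$ iff $T_1\subseteq\pi_1(p)$, and since $0$-bisimilarity (implied by $k$-bisimilarity for any $k$) matches each world in $T_1$ with a propositionally-identical world in $T_2$ and vice versa, this transfers; the case $\varphi=\neg p$ is symmetric. The conjunction case is immediate from the semantics. The crucial and clean case is $\classNegation\psi$: since $M_i,T_i\models\classNegation\psi$ iff $M_i,T_i\nmodels\psi$, and $\md{\classNegation\psi}=\md\psi\leq k$, the induction hypothesis applied to $\psi$ (which has the same modal depth) gives $M_1,T_1\models\psi$ iff $M_2,T_2\models\psi$, and negating both sides yields the claim. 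This is exactly where having the full bidirectional equivalence (rather than a one-directional implication) in the induction hypothesis is essential, and it is the reason \MTL, unlike downward-closed logics, is invariant under bisimulation without any closure restriction.

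The modal cases require the standard back-and-forth argument adapted to teams. For $\varphi=\Diamond\psi$ with $\md\varphi=k+1$, I assume $M_1,T_1\models\Diamond\psi$, so there is a witnessing team $T_1'$ with $M_1,T_1'\models\psi$ satisfying the forward and backward successor conditions relative to $T_1$. I would then construct a witnessing team $T_2'$ in $M_2$: using the $(k+1)$-team-bisimilarity of $T_1,T_2$ together with the world-level forward/backward conditions of Definition of $k$-bisimilarity, I choose for each successor used in $T_1'$ a $k$-bisimilar successor reachable from $T_2$, and dually cover all successors of $T_2$. The resulting $T_2'$ satisfies $M_1,T_1'\bisim{k}M_2,T_2'$, whence the induction hypothesis (for depth $k$, applied to $\psi$) gives $M_2,T_2'\models\psi$, and the successor conditions between $T_2$ and $T_2'$ hold by construction. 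The $\Box$ case is similar but simpler, since the successor team is uniquely determined as the full image under $R$; here I would verify directly that the image of $T_2$ is $k$-team-bisimilar to the image of $T_1$, again using the world-level back-and-forth conditions, and then apply the induction hypothesis.

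The main obstacle I anticipate is the careful construction of the witness team $T_2'$ in the $\Diamond$ case so that it simultaneously meets both successor conditions \emph{and} is $k$-team-bisimilar to $T_1'$; a naive choice of one bisimilar successor per element of $T_1'$ ensures $M_2,T_2'$ dominates $T_2$ in one direction but may miss the backward condition or fail to be matched back from $T_1'$. The correct approach is to define $T_2'$ as the set of all successors of worlds in $T_2$ that are $k$-bisimilar to some world in $T_1'$ (or a suitable sub-selection thereof), and then to check both the two successor conditions relative to $T_2$ and the two team-bisimilarity conditions relative to $T_1'$ using Proposition~\ref{prop:bisimulation and equivalence} only indirectly through the definitions; this symmetry-checking is the one genuinely delicate piece, while everything else is routine.
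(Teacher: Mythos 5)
Your proposal is correct and follows essentially the same route as the paper's proof: induction on the formula, with the witness team $T_2'$ in the $\Diamond$ case defined as the set of all successors of $T_2$-worlds that are bisimilar to some world in $T_1'$, and the $\classNegation$ case handled by the bidirectionality of the induction hypothesis (the paper phrases this via symmetry of $\bisim{k}$ and proves only one implication, but the content is identical). The one case you gloss over is the splitjunction $\vee$, which is not immediate like $\wedge$ but is dispatched by exactly the device you describe for $\Diamond$: split $T_2$ into the sets of worlds $k$-bisimilar to worlds of the respective parts of the splitting of $T_1$.
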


\begin{proof}
 It suffices to show the first claim, since $(M_1,T_1)\bisim{}(M_2,T_2)$ implies $(M_1,T_1)\bisim k(M_2,T_2)$ for every $k\in\mathbb N$. We show the first claim by induction over the construction of $\varphi$. In this proof, we omit the models in the notation $\bisim{k}$ (for the entire proof, the model on the left-hand side of $\bisim k$ is always $M_1$, and the model on the right-hand side is always $M_2$). Clearly, it suffices to show that if $M_1,T_1\models\varphi$, then $M_2,T_2\models\varphi$. Hence assume that $M_1,T_1\models\varphi$
 
 \begin{itemize}
  \item Assume that $\varphi=p$ for a variable $p$ (the case $\varphi=\neg p$ is analogous).  Since $M_1,T_1\models\varphi$, we know that $M_1,w_1\models p$ for each $w_1\in T_1$. Let $w_2\in T_2$. Since $T_1\bisim kT_2$, there is some $w_1\in T_1$ with $w_1\bisim kw_2$. By the above, we know that $M_1,w_1\models p$, and by definition of $\bisim k$, it follows that $M_2,w_2\models p$. Hence $M_2,T_2\models\varphi$ as required.
  \item Assume that $\varphi=\psi\wedge\chi$. Since $M_1\models\varphi$, we know that $M_1,T_1\models\psi$ and $M_1,T_1\models\chi$. By induction, it follows that $M_2,T_2\models\psi$ and $M_2,T_2\models\chi$. Therefore, $M_2,T_2\models\varphi$ as required.
  \item Assume that $\varphi=\psi\vee\chi$. Then $T_1=T_1'\cup T_1''$ with $M_1,T_1'\models\psi$ and $M_1,T_1''\models\chi$.
  
  Let $T_2'=\set{w_2\in T_2\ \vert\ \mathtext{there is some }w_1\in T_1'\mathtext{ with }w_1\bisim kw_2}$ and \linebreak $T_2''=\set{w_2\in T_2\ \vert\ \mathtext{there is some }w_1\in T_1''\mathtext{ with }w_1\bisim kw_2}$. It can easily be verified that $T_2'\cup T_2''=T_2$ and $T_1'\bisim kT_2'$ and $T_1''\bisim kT_2''$. By induction, it follows that $M_2,T_2'\models\psi$ and $M_2,T_2''\models\chi$, and hence $M_2,T_2\models\varphi$ as required.
    
  \item Assume that $\varphi=\classNegation\psi$. Then $M_1,T_1\nmodels\psi$. Assume indirectly that $M_2,T_2\models\psi$. Due to induction, it then follows that $M_1,T_1\models\psi$, a contradiction.
  
  \item Assume that $\varphi=\Box\psi$, let $T_1'=\set{w_1'\ \vert\ (w_1,w_1')\in R_1\mathtext{ for some }w_1\in T_1}$, and $T_2'=\set{w_2'\ \vert\ (w_2,w_2')\in R_1\mathtext{ for some }w_2\in T_2}$. From $M_1,T_1\models\varphi$, we know that $M_1,T_1'\models\psi$, and we need to show that $M_2,T_2'\models\psi$. Due to induction, and since $\md\psi=\md\varphi-1\leq k-1$, it suffices to show that $T_1'\bisim{k-1}T_2'$.
  
  Hence let $w_1'\in T_1'$. Then $w_1'$ is the $R_1$-successor of some $w_1\in T_1$. Since $T_1\bisim k T_2$, there is some $w_2\in T_2$ with $w_1\bisim kw_2$. In particular, there is some $w_2'$ such that $w_2'$ is an $R_2$-successor of $w_2$, and $w_1'\bisim kw_2'$. The other direction is symmetric.

  \item Assume that $\varphi=\Diamond\psi$. Since $M_1,T_1\models\varphi$, there is a team $T_1'$ of $M_1$ such that for each $w_1\in T_1$, there is some $w_1'\in T_1'$ and $(w_1,w_1')\in R_1$. We define the team $T_2'$ as follows:
  
  $T_2'=\{w_2'\in M_2\ \vert\ w_2'\mathtext{ has an }R_2\mathtext{-predecessor in }T_2$\\ \phantom{a} \hfill$\mathtext{ and there is }w_1\in T_1,w_1'\in T_1'\mathtext{ with }(w_1,w_1')\in R_1\mathtext{ and }w_1'\bisim{k-1}w_w'.\}$
  
  By definition, $T_2'$ only contains $R_2$-successors of worlds in $T_2$, and for each world $w_2'$ in $T_2'$, there is some $w_1'\in T_1'$ with $w_1'\bisim kw_2'$. To complete the proof, it therefore remains to show that:
  \begin{enumerate}
   \item For each $w_2\in T_2$, there is some $R_2$-successor $w_2'$ of $w_2$ with $w_2'\in T_2$,
   \item for each $w_1'\in T_1'$, there is some $w_2'\in T_2'$ with $w_1'\bisim kw_2'$.
  \end{enumerate}

  We now prove these claims:
  
  \begin{enumerate}
   \item Let $w_2\in T_2$. Then there is some $w_1\in T_1$ with $w_1\bisim kw_2$. By choice of $T_1'$,there is some $w_1'\in T_1'$ with $(w_1,w_1')\in R_1$. Since $w_1\bisim kw_2$, there is an $R_2$-successor $w_2'$ of $T_2'$ with $w_1'\bisim{k-1}w_2'$. Hence by choice of $T_2'$, we know that $w_2'\in T_2'$, and $w_2'$ is an $R_2$-successor of $w_2$.
   \item Let $w_1'\in T_1'$. Then $w_1'$ is the $R_1$-successor of some $w_1\in T_1$. Since $T_1\bisim kT_2$, there is some $w_2\in T_2$ with $w_1\bisim kw_2$. By choice of $T_1'$, there is an $R_1$-successor $w_1'$ of $w_1$ with $w_1'\in T_1'$. Since $w_1\bisim kw_2$, there is some $R_2$-successor $w_2'$ of $w_2$ with $w_1'\bisim{k-1}w_2'$. Hence, by choice of $T_2'$, we have $w_2'\in T_2'$ as required.
  \end{enumerate}
 \end{itemize}
\end{proof}

The expressive power of classical modal logic (i.e., without team semantics) can be characterized by bisimulations. In particular, for every pointed model $(M,w)$, there is a modal formula of modal depth $k$ that exactly characterizes $(M,w)$ up to $k$-bisimulation.

In the following, we restrict ourselves to a finite set of propositional variables.

\section{Main Result: Expressiveness of \MTL}\label{sect:MTL expressiveness}

In this section, we study the expressive power of \MTL. As usual, we measure the expressive power of  a logic by the set of properties expressible in it.

\begin{definition}
A \emph{team property} is a class of pairs $(M,T)$ where $M$ is a Kripke model and $T\neq\emptyset$ a team of $M$. For an \MTL-formula $\varphi$, we say that $\varphi$ \emph{expresses} the property $\set{(M,T)\ \vert\ M,T\models\varphi}$.
\end{definition}

Note that most variants of modal dependence logic have the \emph{empty team property}, i.e., for all $\varphi\in\EMINCL$ and all Kripke structures $M$, we have $M,\emptyset \models \varphi$, which obviously does not hold for $\MTL$. 
However, it immediately follows from the bisimulation invariance of \MTL that for every \MTL formula $\varphi$ one of the two possibilities hold:
\begin{itemize}
	\item For all Kripke structures $M$, $M,\emptyset \models \varphi$.
	\item For all Kripke structures $M$, $M,\emptyset \not\models \varphi$.
\end{itemize}
For this reason we exclude the empty team in the statement of our results below, but we note that by the remarks above all results cover also the empty team.

\begin{definition}
 Let $P$ be a team property. Then $P$ is \emph{invariant under $k$-bi\-si\-mu\-la\-ti\-on} if for each pair of Kripke models $M_1$ and $M_2$ and teams $T_1$ and $T_2$ with $(M_1,T_1)\bisim k(M_2,T_2)$ and $(M_1,T_1)\in P$, it follows that $(M_2,T_2)\in P$.
\end{definition}

We introduce some (standard) notation. In a model $M$, the \emph{distance} between two worlds $w_1$ and $w_2$ of $M$ is the length of a shortest path from $w_1$ to $w_2$ (the distance is infinite if there is no such path). For a world $w$ of a model $M$ and a natural number $d$, the \emph{$d$-neighborhood of $w$ in $M$}, denoted $\neighborhood wdM$, is the set of all worlds $w'$ of $M$ such that the distance from $w$ to $w'$ is at most $d$. For a team $T$, with $\neighborhood TdM$ we denote the set $\cup_{w\in T}\neighborhood wdM$. We often identify $\neighborhood TdM$ and the model obtained from $M$ by restriction to the worlds in $\neighborhood TdM$.

\begin{definition}
 A team property $P$ is \emph{$d$-local} for some $d\in\mathbb N$ if for all models $M$ and teams $T$, we have $$(M,T)\in P\mathtext{ if and only if }(\neighborhood TdM,T)\in P.$$ We say that $P$ is \emph{local}, if $P$ is $d$-local for some $d\in\mathbb N$.
\end{definition}

Since our main result establishes a connection between team properties definable in MTL and team properties definable in first-order logic, we also define what it means for a team property to be expressed by a first-order formula. For a finite set of propositional variables $\propVar$, we define $\sigma_{\propVar}$ as the first-order signature containing a binary relational symbol $E$ (for the edges in our model), a unary relational symbol $T$ (for representing a team), and, for each variable $x\in\propVar$, a unary relational symbol $W_x$ (representing the worlds in which $x$ is true). Kripke models $M$ with teams $T$ (where we only consider variables in $\propVar$) directly correspond to $\sigma_{\propVar}$ structures: A model $M=(W,R,\pi)$ and a team $T$ uniquely determines the $\sigma_{\propVar}$-structure $\foStructure MT$ with universe $W$ and the obvious interpretations of the symbols in $\sigma_{\propVar}$. 

We therefore say that a first-order formula $\varphi$ over the signature $\sigma_{\propVar}$ expresses a team property $P$, if for all models $M$ with a team $T$, we have that $(M,T)\in P$ if and only if $\foStructure MT\models\varphi$. We can now state the main result of this paper:

\begin{theorem}\label{theorem:MTL expressiveness characterization}
 Let $P$ be a team property. Then the following are equivalent:
 \begin{enumerate}[(i)]
  \item\label{enum:main theorem:MTL expressible} There is an \MTL-formula which expresses $P$,
  \item\label{enum:main theorem:FO expressible and bisimulation invariant} there is a first-order formula which expresses $P$ and $P$ is bisimulation-invariant,
  \item\label{enum:main theorem:bounded bisimulation invariant} $P$ is invariant under $k$-bisimulation for some $k$,  
  \item\label{enum:main theorem:bisimulation invariant and local} $P$ is bisimulation-invariant and local.
 \end{enumerate}
\end{theorem}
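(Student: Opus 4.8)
My plan is to treat condition~(iii), invariance under $k$-bisimulation, as the hub: I would prove (i)$\Rightarrow$(iii), (iii)$\Rightarrow$(i), (iii)$\Rightarrow$(ii), (iii)$\Rightarrow$(iv) and (iv)$\Rightarrow$(iii) by comparatively direct arguments, isolating a single hard implication (ii)$\Rightarrow$(iii). The implication (i)$\Rightarrow$(iii) is immediate from Proposition~\ref{proposition:team bisimulation implies equivalence}: an \MTL-formula $\varphi$ has finite modal depth $k=\md\varphi$, and $k$-bisimilar teams agree on $\varphi$ by that proposition, so the expressed property is invariant under $k$-bisimulation.

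For (iii)$\Rightarrow$(i) I would use characteristic formulas. Over the fixed finite propositional vocabulary there are, for each $k$, only finitely many $k$-bisimulation types of pointed models, each characterized by a modal Hintikka formula $\chi_\tau$ of modal depth at most $k$. The $k$-bisimulation type of a team $(M,T)$ is determined by the set $S$ of world-types realized in $T$, and this set is defined in \MTL by $\vartheta_S \ddfn \bigwedge_{\tau\in S}\existsWorld\chi_\tau \wedge \bigvee_{\tau\in S}\chi_\tau$: the first conjunct forces each type in $S$ to occur, while the second, read through the flatness of \ML-formulas (Proposition~\ref{proposition:flatness}), forbids any realized type outside $S$. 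A $k$-bisimulation-invariant $P$ is a union of team-types, hence expressed by the classical disjunction $\ovee$ of the finitely many $\vartheta_S$ with $S$ a $P$-type, which is an \MTL-formula. The same type list yields (iii)$\Rightarrow$(ii): applying the standard translation $ST_x$ to each $\chi_\tau$ turns $\vartheta_S$ into the $\sigma_{\propVar}$-sentence $\bigwedge_{\tau\in S}\exists x\,(T(x)\wedge ST_x(\chi_\tau)) \wedge \forall x\,\bigl(T(x)\to\bigvee_{\tau\in S}ST_x(\chi_\tau)\bigr)$, and disjoining over the $P$-types expresses $P$ in \FO; bisimulation-invariance is free, since full bisimilarity implies $k$-bisimilarity.

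The equivalence of (iii) and (iv) I would handle as follows. For (iii)$\Rightarrow$(iv), restricting $M$ to $\neighborhood TkM$ changes nothing that a $k$-step bisimulation game started in $T$ can observe, i.e.\ $(M,T)\bisim k(\neighborhood TkM,T)$; with $k$-bisimulation invariance this yields $k$-locality, and bisimulation-invariance is again immediate. For (iv)$\Rightarrow$(iii) I would avoid compactness entirely: given $d$-locality, replace both teams by their forest unfoldings, which are fully bisimilar to the originals (hence $P$-equivalent) and, since $\bisim d$ coincides with $d$-equivalence (Proposition~\ref{prop:bisimulation and equivalence}) and is therefore transitive, still $d$-bisimilar to each other; then restrict to the depth-$d$ neighborhoods of the roots. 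On a forest truncated to depth $d$ there is no path longer than $d$, so $d$-bisimilarity from the roots is already a full bisimulation, and bisimulation-invariance closes the chain. Thus $P$ is invariant under $d$-bisimulation.

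This leaves the genuinely hard implication (ii)$\Rightarrow$(iii), the team-semantics incarnation of the van Benthem theorem, which I expect to be the main obstacle: a bisimulation-invariant, \FO-definable property must already be invariant under $k$-bisimulation for a $k$ bounded in terms of the quantifier rank of the defining formula $\varphi$. The plan is the classical compactness/saturation route. Assuming failure, I would extract for every $n$ a pair of $n$-bisimilar teams disagreeing on $\varphi$, and use compactness to produce one pair of teams that are $n$-bisimilar for all $n$ yet still disagree on $\varphi$; passing to $\omega$-saturated elementary extensions upgrades ``$n$-bisimilar for all $n$'' to full bisimilarity (a Hennessy--Milner argument), whereupon bisimulation-invariance of $P$ contradicts $\varphi$ separating the two teams. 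The delicate, genuinely team-theoretic points are that the team is encoded by the unary predicate $T$, so saturation must be arranged to deliver not only the world-level back-and-forth but also the covering conditions matching $T_1$ into $T_2$ and conversely; that the \FO-formula must be prevented from exploiting parts of the model unreachable from the team, which I would arrange by first passing to the team-generated submodels using bisimulation-invariance; and that the bisimulation depth $k$ has to be pinned to the quantifier rank. Verifying that the saturated team-bisimulation simultaneously satisfies the back-and-forth and the team-covering clauses is where the real work lies.
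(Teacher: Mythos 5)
Your proposal is correct, and on the easy implications it coincides with the paper: (i)$\Rightarrow$(iii) from Proposition~\ref{proposition:team bisimulation implies equivalence}, (iii)$\Rightarrow$(i) via the classical disjunction of the team-type formulas $\bigwedge_{\tau\in S}\existsWorld\chi_\tau\wedge\bigvee_{\tau\in S}\chi_\tau$ (these are exactly the paper's $\hintikkaModelWithTeam MTk$), (iii)$\Rightarrow$(ii) by the standard translation, and (iv)$\Rightarrow$(iii) by unravelling to forests and truncating at depth $d$. Where you genuinely diverge is on the hard implication: the paper closes the cycle as (ii)$\Rightarrow$(iv)$\Rightarrow$(iii), proving locality from the Hanf-locality of first-order logic via a construction that pads $M$ and the disconnected model $\mDiss$ with countably many copies of every $2d$-neighborhood so as to realize the same multiset of neighborhood types; you instead prove (ii)$\Rightarrow$(iii) directly by the classical compactness/$\omega$-saturation (Hennessy--Milner) route and obtain locality afterwards from (iii)$\Rightarrow$(iv). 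Both work. Your saturation argument does go through at the team level: team $n$-bisimilarity is first-order over the doubled signature (finitely many $n$-types, each $\FO$-definable), so compactness yields a pair of teams that are $n$-bisimilar for all $n$ yet disagree on $\varphi$, and in an $\omega$-saturated elementary extension the covering conditions are recovered by realizing, for $w\in T_1$, the finitely satisfiable type $\{T_2(x)\}\cup\{ST_x(\chi^n_w):n\in\mathbb N\}$ (and symmetrically); the union of the resulting world-level bisimulations is a full team bisimulation, contradicting bisimulation-invariance. What each approach buys: the paper's Hanf-locality argument is ``elementary'' in Otto's sense (no saturation, no compactness) and yields an explicit locality radius $2d$ from the Hanf rank, at the price of the infinite-copies construction; your route is shorter to state and closer to van Benthem's original proof, but as a pure contradiction argument it only yields \emph{some} $k$ --- your stated aim of pinning $k$ to the quantifier rank would need the Ehrenfeucht--Fra\"{\i}ss\'e or locality refinement and is not needed for the theorem. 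Your preliminary step of passing to team-generated submodels is likewise unnecessary for the saturation route (full team bisimilarity is indifferent to unreachable parts), though harmless.
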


This result characterizes the expressive power of \MTL\ in several ways. The equivalence of points~\ref{enum:main theorem:MTL expressible} and~\ref{enum:main theorem:FO expressible and bisimulation invariant} is a natural analog to the classic van Benthem theorem which states that standard modal logic directly corresponds to the bisimulation-invariant fragment of first-order logic. It is easy to see that characterizations corresponding to items~\ref{enum:main theorem:bounded bisimulation invariant} and~\ref{enum:main theorem:bisimulation invariant and local} also hold in the classical setting. Our result therefore shows that \MTL\ plays the same role for team-based modal logics as \ML\ does for standard modal logic.

The connection between our result and van Benthem's Theorem~\cite{vanbenthPHD,BenthemBook} is also worth discussing. Essentially, van Benthem's Theorem is the same result as ours, where ``\MTL'' is replaced by ``\ML'' and properties of pointed models (i.e., singleton teams) are considered. In \ML, classical negation is of course freely available; however the property of a team being a singleton is clearly not invariant under bisimulation---but the property of a team having only one element \emph{up to bisimulation} is. It therefore follows that each property of singleton teams that is invariant under bisimulation and that can be expressed in \MTL\ can already be expressed in \ML.

The remainder of Section~\ref{sect:MTL expressiveness} is devoted to the proof of Theorem~\ref{theorem:MTL expressiveness characterization}. The proof relies on various formulas that characterize pointed models, teams of pointed models, or team properties up to $k$-bisimulation, for some $k\in\mathbb N$. In Table~\ref{fig:formulas overview}, we summarize the notation used in the following and explain the intuitive meaning of these formulas.

\subsection{Expressing Properties in MTL and Hintikka Formulas}

We start with a natural characterization of the semantics of splitjunction $\vee$ for \ML-formulas.

\begin{restatable}{proposition}{lemmabigsplitjunctioncharacterization}\label{lemma:big splitjunction characterization}
 Let $S$ be a non-empty finite set of \ML-formulas, let $M$ be a model and $T$ a team. Then
 $M,T\models\bigvee_{\varphi\in S}\varphi$ if and only if for each world $w\in T$, there is a formula $\varphi\in S$ with $M,\set w\models\varphi$.
\end{restatable}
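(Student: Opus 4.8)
The plan is to prove the equivalence by induction on the cardinality $\card S$ of $S$, where the only substantial ingredient is the flatness property of \ML-formulas (Proposition~\ref{proposition:flatness}) combined with the team semantics of the binary splitjunction $\vee$ from Definition~\ref{def:semanticsML}. Since $\vee$ is associative and commutative up to logical equivalence, $\bigvee_{\varphi\in S}\varphi$ may be regarded as the iterated binary splitjunction, which allows the induction to peel off one disjunct at a time.

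For the base case $\card S=1$, say $S=\set\varphi$, the statement reduces directly to flatness: $M,T\models\varphi$ holds if and only if $M,\set w\models\varphi$ for every $w\in T$, which is precisely the claimed right-hand side. For the inductive step I would write $S=S'\cup\set\chi$ with $\chi\notin S'$ and $S'$ non-empty, so that $\bigvee_{\varphi\in S}\varphi$ is equivalent to $(\bigvee_{\varphi\in S'}\varphi)\vee\chi$. By the semantics of $\vee$, this holds at $(M,T)$ iff $T=T_1\cup T_2$ with $M,T_1\models\bigvee_{\varphi\in S'}\varphi$ and $M,T_2\models\chi$; applying the induction hypothesis to the first part and flatness to the second rewrites the condition as: there is a decomposition $T=T_1\cup T_2$ such that every $w\in T_1$ satisfies some $\varphi\in S'$ in the singleton sense and every $w\in T_2$ satisfies $\chi$ in the singleton sense. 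The forward direction is then immediate, since every $w\in T$ lies in $T_1$ or $T_2$ and hence satisfies some formula of $S$.

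The backward direction is where the small amount of genuine work lies, and it is the step I expect to require the most care. Given that each $w\in T$ satisfies some formula of $S=S'\cup\set\chi$ in the singleton sense, I would set $T_2=\set{w\in T\ \vert\ M,\set w\models\chi}$ and $T_1=\set{w\in T\ \vert\ M,\set w\models\varphi\text{ for some }\varphi\in S'}$. By assumption these cover $T$ (though they need not be disjoint, which is harmless); every world of $T_1$ satisfies a formula of $S'$, so $M,T_1\models\bigvee_{\varphi\in S'}\varphi$ by the induction hypothesis, while $M,T_2\models\chi$ by flatness, and hence $M,T\models\bigvee_{\varphi\in S}\varphi$ by the semantics of $\vee$.

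The only remaining subtlety is that one of the sub-teams $T_1,T_2$ may be empty, but this causes no difficulty: by flatness the condition defining satisfaction over a team is vacuously met by the empty team, so $M,\emptyset\models\psi$ for every \ML-formula $\psi$, and an empty sub-team witnesses its disjunct trivially. This also shows the statement is unaffected by whether $T$ itself is empty, matching the treatment of empty teams elsewhere in the paper.
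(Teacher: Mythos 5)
Your proof is correct and follows essentially the same route as the paper's: induction on $\card S$, with the base case given by flatness and the inductive step peeling off one disjunct, using the same definitions of the sub-teams $T_1$ and $T_2$ in the backward direction. Your explicit remark about empty sub-teams is a harmless (and valid) addition that the paper leaves implicit.
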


\begin{proof}
Let $M$ be a model and $T$ a team. For $\card S=1$, the claim follows from Proposition~\ref{proposition:flatness}. Therefore assume that the lemma holds for $S'=\set{\varphi_1,\dots,\varphi_{n-1}}$, and consider a set $S=S'\cup\set{\varphi_n}$. Then $\bigvee_{\varphi\in S}\varphi=\bigvee_{\varphi\in S'}\vee\varphi_n$.
 
First assume that $(M,T)\models\bigvee_{\varphi\in S}$. Then $(M,T)\models\bigvee_{\varphi\in S'}\varphi\vee\varphi_n$. Therefore, $T=T_1\cup T_2$ with
 
 \begin{itemize}
  \item $M,T_1\models\bigvee_{\varphi\in S'}\varphi$,
  \item $M,T_2\models\varphi_n$.
 \end{itemize}

 Due to the induction assumption, we know that for every world $w\in T_1$, there is a formula $\varphi\in S'\subseteq S$ with $M,w\models\varphi$. Due to Proposition~\ref{proposition:flatness},  we know that each world $w\in T_2$ satisfies the formula $\varphi_n\in S$, therefore each world $w\in T$ satisfies some formula $\varphi\in S$ as claimed.
 
 For the converse, assume that for each $w\in T$, there is some formula $\varphi\in S$ with $M,w\models\varphi$. Let $T_1=\set{w\in T\ \vert\ M,w\models\varphi_i\mathtext{ for some }i\leq n-1}$, and let $T_2=\set{w\in T\ \vert\ M,w\models\varphi_n}$. By the prerequisites we know that $T=T_1\cup T_2$. Due to induction, we have that $M,T_1\models\bigvee_{\varphi\in S'}\varphi$, and, due to Proposition~\ref{proposition:flatness}, we know that $M,T_2\models\varphi_n$. It therefore follows that $M,T\models\bigvee_{\varphi\in S'}\varphi\vee\varphi_n=\bigvee_{\varphi\in S}\varphi$ as claimed.
\end{proof}

\begin{table}
 \begin{tabular}{lp{9.5cm}l}
  \textbf{Formula} & \textbf{Intuition} & Defined in \\ \hline 
  $\hintikkaPointedModel Mwk$ & Characterizes the pointed model $(M,w)$ up to $k$-bisimilarity  & Theorem~\ref{theorem:hintikka formulas} \\
  $\hintikkaFormulasLevel k$ & All formulas of the form $\hintikkaPointedModel Mwk$ (this is a finite set) & Definition~\ref{definition:hintikkaFormulasLevel} \\
  $\hintikkaSetOfPointedModelsInTeam MTk$ & Formulas 
  characterizing pointed models $(M,w)$, where $w\in T$, up to $k$-bisimilarity (this is a finite set) & Definition~\ref{definition:hintikkaSetOfPointedModelsInTeam}\\
  $\hintikkaModelWithTeam MTk$ & Formula characterizing model $M$ with $T$ up to $k$-bisimilarity & Definition~\ref{definition:hintikkaModelWithTeam}
   \end{tabular}\\[.5ex]
\caption{Formulas and sets of formulas used in the proof of Theorem~\ref{theorem:MTL expressiveness characterization}}\label{fig:formulas overview}
\end{table}

The following result is standard:

\begin{theorem}\label{theorem:hintikka formulas}{\rm\bfseries\cite[Theorem 32]{GorankoOtto06}}-
 For each pointed Kripke model $(M,w)$ and each natural number $k$, there is a \emph{Hintikka formula} (or characteristic formula) $\hintikkaPointedModel Mwk\in\ML$ with $\md{\hintikkaPointedModel Mwk}=k$ such that for each pointed model $(M',w')$, the following are equivalent:
 \begin{enumerate}
  \item $M',w'\models\hintikkaPointedModel Mwk$,
  \item $(M,w)\bisim k(M',w')$.
 \end{enumerate}
\end{theorem}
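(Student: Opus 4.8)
The plan is to prove the statement by induction on $k$, giving an explicit construction of the Hintikka formula $\hintikkaPointedModel Mwk$ at each level and then verifying the claimed equivalence by a parallel induction. For the base case $k=0$, since we work with a finite set of propositional variables, I would set
\[
\hintikkaPointedModel Mw0 \ddfn \bigwedge_{p\,:\,M,w\models p} p \ \wedge\ \bigwedge_{p\,:\,M,w\nmodels p} \neg p,
\]
which has modal depth $0$. By the definition of $\bisim 0$ (agreement on all propositional variables), $M',w'\models\hintikkaPointedModel Mw0$ holds precisely when $w$ and $w'$ satisfy the same propositional variables, i.e.\ exactly when $(M,w)\bisim 0(M',w')$.

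For the inductive step, assuming the formulas $\hintikkaPointedModel Mvk$ are already defined for all pointed models, I would set
\[
\hintikkaPointedModel M{w}{k+1} \ddfn \hintikkaPointedModel Mw0 \ \wedge\ \bigwedge_{(w,v)\in R}\Diamond\,\hintikkaPointedModel Mvk \ \wedge\ \Box\!\!\bigvee_{(w,v)\in R}\!\!\hintikkaPointedModel Mvk .
\]
The first conjunct records the propositional type of $w$, the second conjunct (the forward part) demands that every $k$-type realized among the successors of $w$ is also realized among the successors of $w'$, and the third conjunct (the backward part, a box over the disjunction of all successor types) forbids $w'$ from having a successor whose $k$-type is not already present among the successors of $w$. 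This formula has modal depth $k+1$ as required.

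The equivalence is then established by induction on $k$, where $\models$ denotes the standard pointed-model semantics, so $\Diamond$ and $\Box$ carry their usual single-world meaning. For the direction $(M,w)\bisim{k+1}(M',w')\Rightarrow M',w'\models\hintikkaPointedModel M{w}{k+1}$, the first conjunct follows from the $\bisim 0$-part of $(k+1)$-bisimilarity; each diamond is witnessed by applying the forward condition of $\bisim{k+1}$ to a successor $v$ of $w$ and invoking the induction hypothesis on the matching successor of $w'$; and the box is verified by applying the backward condition to each successor of $w'$ and again using the induction hypothesis. The converse runs symmetrically: from $M',w'\models\hintikkaPointedModel M{w}{k+1}$ the diamonds supply, for every successor $v$ of $w$, a $k$-bisimilar successor of $w'$ (yielding the forward condition), while the box over the disjunction forces every successor of $w'$ to be $k$-bisimilar to some successor of $w$ (yielding the backward condition), so that $(M,w)\bisim{k+1}(M',w')$.

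The main obstacle, and the point where the finiteness of the variable set really matters, is that the conjunction and disjunction in the inductive definition a priori range over all successors $v$ of $w$, of which there may be infinitely many. The key observation legitimizing the construction is that $\hintikkaPointedModel Mvk$ depends only on the $\bisim k$-type of $(M,v)$, and that over a finite set of propositional variables there are, up to logical equivalence, only finitely many formulas of the form $\hintikkaPointedModel{\cdot}{\cdot}{k}$. This last fact is itself proved by induction on $k$: there are finitely many $0$-types, and each $(k+1)$-type is determined by a $0$-type together with a \emph{set} of $k$-types, of which there are only finitely many. Hence the index sets may be replaced by finite sets of representatives, so that $\hintikkaPointedModel M{w}{k+1}$ is a genuine finite $\ML$-formula and the collection $\hintikkaFormulasLevel k$ of all level-$k$ Hintikka formulas is finite.
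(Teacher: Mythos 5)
The paper states this result without proof, citing it as standard (Goranko--Otto, Theorem~32); your construction --- the propositional type of $w$, a $\Diamond$-conjunct for each successor type to enforce the forward condition, a $\Box$ over the disjunction of successor types to enforce the backward condition, with the index sets made finite via the finiteness of $\bisim{k}$-types over a finite variable set --- is precisely the standard argument behind the cited theorem, and both the construction and your verification of the equivalence are correct. The only cosmetic point is that for a dead-end world the inductive step degenerates to a formula of modal depth $1$ rather than exactly $k+1$, so the clause $\md{\hintikkaPointedModel Mwk}=k$ should either be read as $\md{\hintikkaPointedModel Mwk}\leq k$ or the formula padded with a vacuous conjunct of the right depth; this does not affect any use of the theorem in the paper.
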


Clearly, we can choose the Hintikka formulas such that $\hintikkaPointedModel Mwk$ is uniquely determined by the bisimilarity type of $(M,w)$. This implies that for $k$-bisimilar pointed models $(M_1,w_1)$ and $(M_2,w_2)$, the formulas $\hintikkaPointedModel{M_1}{w_1}k$ and $\hintikkaPointedModel{M_2}{w_2}k$ are identical. 

It it clear that Theorem~\ref{theorem:hintikka formulas} does not hold for an infinite set of propositional symbols, since a finite formula can only specify the values of finitely many variables.

We now define the set of all Hintikka formulas that will appear in our later constructions. Informally, $\hintikkaFormulasLevel k$ is the set of all Hintikka formulas characterizing models up to $k$-bisimilarity:

\begin{definition}\label{definition:hintikkaFormulasLevel}
For $k\in \mathbb{N}$, the set $\hintikkaFormulasLevel k$ is defined as 
 $$\hintikkaFormulasLevel k=\set{\hintikkaPointedModel Mwk\ \vert\ (M,w)\mathtext{ is a pointed Kripke model}}.$$
\end{definition}

An important observation is that $\hintikkaFormulasLevel k$ is a finite set: This follows since above, we chose the representatives $\hintikkaPointedModel Mwk$ to be identical for $k$-bisimilar pointed models, and since there are only finitely many pointed models up to $k$-bisimulation. Since $\hintikkaFormulasLevel k$ is finite, we can in the following freely use disjunctions over arbitrary subsets of $\hintikkaFormulasLevel k$ and still obtain a finite formula. We will make extensive use of this fact in the remainder of Section~\ref{sect:MTL expressiveness}, often without reference.

Our next definition is used to characterize a team, again up to $k$-bisimulation. Since teams are sets of worlds, we use sets of formulas to characterize teams in the natural way, by choosing, for each world in the team, one formula that characterizes it.

\begin{definition}\label{definition:hintikkaSetOfPointedModelsInTeam}
 For a model $M$ and a team $T$, let
 $$\hintikkaSetOfPointedModelsInTeam MTk=\set{\varphi\in\hintikkaFormulasLevel k\ \vert\ \mathtext{there is some }w\in T\mathtext{ with }M,w\models\varphi}.$$
\end{definition}

Since $\hintikkaSetOfPointedModelsInTeam MTk\subseteq\hintikkaFormulasLevel k$, it follows that $\hintikkaSetOfPointedModelsInTeam MTk$ is finite as well. In fact, it is easy to see that $\card{\hintikkaSetOfPointedModelsInTeam MTk}$ is exactly the number of $k$-bisimilarity types in $T$, i.e., the size of a maximal subset of $T$ containing only worlds such that the resulting pointed models are pairwise non-$k$-bisimilar.

We now combine the formulas from $\hintikkaSetOfPointedModelsInTeam MTk$ to be able to characterize $M$ and $T$ (up to $k$-bisimulation) by a single formula:

\begin{definition}\label{definition:hintikkaModelWithTeam}
 For a model $M$ with a team $T\neq\emptyset$, let $$\hintikkaModelWithTeam MTk=\left(\bigwedge_{\varphi\in\hintikkaSetOfPointedModelsInTeam MTk}\existsWorld\varphi\right)\wedge\left(\bigvee_{\varphi\in\hintikkaSetOfPointedModelsInTeam MTk}\varphi\right).$$
\end{definition}

Intuitively, the formula $\hintikkaModelWithTeam MTk$ expresses that in a model $M'$ and $T'$ with $M',T'\models\hintikkaModelWithTeam MTk$, for each world $w\in T$ there must be some $w'\in T'$ such that $(M,w)\bisim k(M',w')$, and conversely, for each $w'\in T'$, there must  be some $w\in T$ with $(M,w)\bisim k(M',w')$, which then implies that $(M,T)$ and $(M',T')$ are indeed $k$-bisimilar.

From the above, it follows that $\hintikkaModelWithTeam MTk$ is a finite \MTL-formula. Therefore, with the above intuition, it follows that $\hintikkaModelWithTeam MTk$ expresses $k$-bisimilarity with $(M,T)$. 

\begin{restatable}{proposition}{lemmateambisimulationcharacterizationwithhintikkaformulas}
\label{proposition:team bisimulation characterization with hintikka formulas}
 Let $M_1,M_2$ be Kripke models with teams nonempty $T_1,T_2$. Then the following are equivalent:
 \begin{itemize}
  \item $(M_1,T_1)\bisim k (M_2,T_2)$
  \item $M_1,T_1\models\hintikkaModelWithTeam{M_2}{T_2}k$.
 \end{itemize}
\end{restatable}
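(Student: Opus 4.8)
The plan is to unfold the semantics of the two conjuncts of $\hintikkaModelWithTeam{M_2}{T_2}k$ and match them one-to-one against the backward and forward conditions of team $k$-bisimulation. First I would record what the right-hand side actually asserts. Recall that $\existsWorld\varphi$ holds at $(M_1,T_1)$ exactly when some world $w_1\in T_1$ satisfies $\varphi$ in the ordinary (worldwise) sense, and that by Proposition~\ref{lemma:big splitjunction characterization} the splitjunction $\bigvee_{\varphi\in\hintikkaSetOfPointedModelsInTeam{M_2}{T_2}k}\varphi$ holds at $(M_1,T_1)$ exactly when \emph{every} world $w_1\in T_1$ satisfies at least one $\varphi\in\hintikkaSetOfPointedModelsInTeam{M_2}{T_2}k$. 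Hence $M_1,T_1\models\hintikkaModelWithTeam{M_2}{T_2}k$ is equivalent to the conjunction of conditions (a) for every $\varphi\in\hintikkaSetOfPointedModelsInTeam{M_2}{T_2}k$ there is $w_1\in T_1$ with $M_1,w_1\models\varphi$, and (b) for every $w_1\in T_1$ there is $\varphi\in\hintikkaSetOfPointedModelsInTeam{M_2}{T_2}k$ with $M_1,w_1\models\varphi$.

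The crucial bridge is Theorem~\ref{theorem:hintikka formulas} together with the convention that the representative $\hintikkaPointedModel Mwk$ depends only on the $k$-bisimilarity type. This gives, for any pointed model $(M',w')$ and any $\varphi\in\hintikkaFormulasLevel k$, that $M',w'\models\varphi$ if and only if $\varphi=\hintikkaPointedModel{M'}{w'}k$. In particular, membership $\varphi\in\hintikkaSetOfPointedModelsInTeam{M_2}{T_2}k$ is the same as saying $\varphi=\hintikkaPointedModel{M_2}{w_2}k$ for some $w_2\in T_2$, so the set $\hintikkaSetOfPointedModelsInTeam{M_2}{T_2}k$ is precisely the collection of characteristic formulas of the $k$-bisimilarity types occurring in $T_2$. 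This lets me translate ``some world satisfies $\varphi$'' into ``some world is $k$-bisimilar to a witness $w_2\in T_2$,'' which is exactly the vocabulary of team $k$-bisimulation.

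With this dictionary in hand the two directions become symmetric bookkeeping. For the implication $(M_1,T_1)\bisim k(M_2,T_2)\Rightarrow M_1,T_1\models\hintikkaModelWithTeam{M_2}{T_2}k$: condition (a) follows from the backward clause of team bisimulation, since each $\varphi\in\hintikkaSetOfPointedModelsInTeam{M_2}{T_2}k$ is $\hintikkaPointedModel{M_2}{w_2}k$ for some $w_2\in T_2$, the backward clause supplies a $k$-bisimilar $w_1\in T_1$, and that $w_1$ then satisfies $\varphi$ by Theorem~\ref{theorem:hintikka formulas}; condition (b) follows from the forward clause, since given $w_1\in T_1$ the forward clause yields $w_2\in T_2$ with $(M_1,w_1)\bisim k(M_2,w_2)$, whence $w_1\models\hintikkaPointedModel{M_2}{w_2}k\in\hintikkaSetOfPointedModelsInTeam{M_2}{T_2}k$. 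Conversely, reading (a) backwards produces the backward clause and reading (b) backwards produces the forward clause, again using Theorem~\ref{theorem:hintikka formulas} to pass between satisfaction of a Hintikka formula and $k$-bisimilarity of the corresponding worlds.

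I do not expect a genuine obstacle here; the entire content is keeping straight which conjunct corresponds to which clause of the team-bisimulation definition. The single point that must be stated carefully is the bisimilarity-type normalization of the Hintikka representatives: without it the equality $\varphi=\hintikkaPointedModel{M_2}{w_2}k$ could fail even when $M_2,w_2\models\varphi$, and the identification of $\hintikkaSetOfPointedModelsInTeam{M_2}{T_2}k$ with the type-representatives occurring in $T_2$ would break down. Since this normalization was already fixed in the discussion following Theorem~\ref{theorem:hintikka formulas}, I would simply invoke it.
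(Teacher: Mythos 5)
Your proposal is correct and follows essentially the same route as the paper's own proof: both unfold the two conjuncts via the semantics of $\existsWorld$ and Proposition~\ref{lemma:big splitjunction characterization}, and both use Theorem~\ref{theorem:hintikka formulas} together with the normalization of Hintikka representatives to translate between satisfaction of a formula in $\hintikkaSetOfPointedModelsInTeam{M_2}{T_2}k$ and $k$-bisimilarity to a world of $T_2$, matching the conjuncts to the backward and forward clauses exactly as you describe. Your explicit remark that the type-normalization of the representatives is the one point that must not be glossed over is well placed; the paper relies on it in the same way.
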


\begin{proof}
 First assume that $(M_1,T_1)\bisim k (M_2,T_2)$. To see that $M_1,T_1\models\bigwedge_{\varphi\in\hintikkaSetOfPointedModelsInTeam{M_2}{T_2}k}\existsWorld\varphi$, let $\varphi\in\hintikkaSetOfPointedModelsInTeam{M_2}{T_2}k$. By definition, $\varphi$ is an \ML-formula with $\md\varphi\leq k$ and there is some world $w_2\in T_2$ such that $M_2,w_2\models\varphi$. Due to the bisimulation condition, we know that there is a world $w_1\in T_1$ with $(M_1,w_1)\bisim k(M_2,w_2)$. From Proposition~\ref{prop:bisimulation and equivalence}, it follows that $M_1,w_1\models\varphi$. Since $w_1\in T_1$, it follows that $M_1,T_1\models\existsWorld\varphi$. Since this is true for each $\varphi\in\hintikkaSetOfPointedModelsInTeam{M_2}{T_2}k$, it follows that $M_1,T_1\models\bigwedge_{\varphi\in\hintikkaSetOfPointedModelsInTeam{M_2}{T_2}k}\existsWorld\varphi$.
 
 It remains to show that $M_1,T_1\models\bigvee_{\varphi\in\hintikkaSetOfPointedModelsInTeam{M_2}{T_2}k}\varphi$. Due to Lemma~\ref{lemma:big splitjunction characterization}, it suffices to show that for each world $w_1\in T_1$, there is some formula $\varphi\in\hintikkaSetOfPointedModelsInTeam{M_2}{T_2}k$ with $M_1,w_1\models\varphi$. Therefore, let $w_1\in T_1$. Due to the bisimulation condition, we know that there is some $w_2\in T_2$ with $(M_1,w_1)\bisim k (M_2,w_2)$. Due to the definition of $\hintikkaSetOfPointedModelsInTeam{M_2}{T_2}k$, we know that for the formula $\varphi:=\Phi^k_{M_2,w_2}$ we have that $\varphi\in\hintikkaSetOfPointedModelsInTeam{M_2}{T_2}k$ and $M_2,w_2\models\varphi$. Since $(M_1,w_1)\bisim k(M_2,w_2)$, it follows that $M_1,w_1\models\varphi$, and hence for each $w_1\in T_1$ there is a formula $\varphi$ as required.
 
 For the converse, assume that $M_1,T_1\models\hintikkaModelWithTeam{M_2}{T_2}k$.
 In particular, it follows that $M_1,T_1\models \bigvee_{\varphi\in\hintikkaSetOfPointedModelsInTeam{M_2}{T_2}k}\varphi$. Due to Lemma~\ref{lemma:big splitjunction characterization}, it follows that for each $w_1\in T_1$, there is some formula $\varphi\in\hintikkaSetOfPointedModelsInTeam{M_2}{T_2}k$ with $M_1,w_1\models\varphi$. Due to the definition of $\hintikkaSetOfPointedModelsInTeam{M_2}{T_2}k$, we know that $\varphi\in\hintikkaFormulasLevel k$, and that there is some $w_2\in T_2$ with $M_2,w_2\models\varphi$. From Theorem~\ref{theorem:hintikka formulas}, it therefore follows that for each $w_1\in T_1$, there is some $w_2\in T_2$ with $(M_1,w_1)\bisim k(M_2,w_2)$.
 
 Let $w_2$ be a world in $T_2$, and let $\varphi:=\Phi^k_{M_2,w_2}$. It then follows that $\varphi\in\hintikkaSetOfPointedModelsInTeam{M_2}{T_2}k$. Since $M_1,T_1\models\bigwedge_{\varphi\in\hintikkaSetOfPointedModelsInTeam{M_2}{T_2}k}\existsWorld\varphi$, it follows from the definition of the operator $\existsWorld$ that there is some $w_1\in T_1$ with $M_1,w_1\models\varphi$. Due to the choice of $\varphi$, Theorem~\ref{theorem:hintikka formulas} implies that $(M_1,w_1)\bisim k(M_2,w_2)$ as required.
\end{proof}

\subsection{Proof of Theorem~\ref{theorem:MTL expressiveness characterization}}

In this section, we prove our main result, Theorem~\ref{theorem:MTL expressiveness characterization}.

\subsubsection{Proof of equivalence~\ref{theorem:MTL expressiveness characterization}.(\ref{enum:main theorem:MTL expressible}) $\leftrightarrow$ \ref{theorem:MTL expressiveness characterization}.(\ref{enum:main theorem:bounded bisimulation invariant})}

\begin{proof}
  The direction \ref{enum:main theorem:MTL expressible} $\rightarrow$ \ref{enum:main theorem:bounded bisimulation invariant} follows immediately from Proposition~\ref{proposition:team bisimulation implies equivalence}. For the converse, assume that $P$ is invariant under $k$-bisimulation. Without loss of generality assume $P\neq\emptyset$. We claim that the formula
  
  $$\varphi_P:=\ovee_{(M,T)\in P}\hintikkaModelWithTeam MTk$$ expresses $P$.
  
  First note that $\varphi_P$ can be written as  the disjunction of only finitely many formulas: Each $\hintikkaModelWithTeam MTk$ is uniquely defined by a subset of the finite set $\hintikkaFormulasLevel k$, therefore there are only finitely many formulas of the form $\hintikkaModelWithTeam MTk$. 
  
  We now show that for each model $M$ and team $T$, we have that $(M,T)\in P$ if and only if $M,T\models\varphi_P$. First assume that $(M,T)\in P$. Then the fact that $(M,w)\bisim k(M,w)$ for each model $M$, each world $w$ and each number $k$ and Proposition~\ref{proposition:team bisimulation characterization with hintikka formulas} imply that $M,T\models\hintikkaModelWithTeam MTk$. Therefore, $M,T\models\varphi_P$. For the converse, assume that $M,T\models\varphi_P$. Then there is some $(M',T')\in P$ with $M,T\models\hintikkaModelWithTeam{M'}{T'}k$. Due to Proposition~\ref{proposition:team bisimulation characterization with hintikka formulas}, it follows that $(M,T)\bisim k(M',T')$. Since $P$ is invariant under $k$-bisimulation, it follows that $(M,T)\in P$ as required.
\end{proof}

\subsubsection{Proof of implication \ref{theorem:MTL expressiveness characterization}.(\ref{enum:main theorem:bounded bisimulation invariant}) $\rightarrow$ \ref{theorem:MTL expressiveness characterization}.(\ref{enum:main theorem:FO expressible and bisimulation invariant})}

\begin{proof}
 It suffices to show that $P$ can be expressed in first-order logic. This follows using essentially the standard translation from modal into first-order logic. Since classical disjunction is of course available in first-order logic, the proof of the implication \ref{enum:main theorem:bounded bisimulation invariant} $\rightarrow$ \ref{enum:main theorem:MTL expressible} shows that it suffices to express each $\hintikkaModelWithTeam MTk$ (expressing team-bisimilarity to $M,T$) in first-order logic. 
 
 Each of the Hintikka formulas $\hintikkaPointedModel Mwk$ (expressing bisimilarity to the pointed model $M,w$) is a standard modal formula, therefore an application of the standard translation gives a first-order formula $\phi_{M,w}^{k,\textrm{FO}}$
 with a free variable $x$ such that for all models $M'$ and worlds $w'$, we have that $M',w'\models\hintikkaPointedModel Mwk$ if and only if $\foStructure{M'}{\emptyset}\models \phi_{M,w}^{k,\textrm{FO}}(w)$. We now show how to express $\hintikkaModelWithTeam MTk$ (expressing team-bisimilarity to $M,T$) in first-order logic.
 
 Recall that $\hintikkaModelWithTeam MTk$ is defined as $\left(\bigwedge_{\varphi\in\hintikkaSetOfPointedModelsInTeam MTk}\existsWorld\varphi\right)\wedge\left(\bigvee_{\varphi\in\hintikkaSetOfPointedModelsInTeam MTk}\varphi\right)$. Therefore, a first-order representation of $\hintikkaModelWithTeam MTk$ is given as
 $$\left(\bigwedge_{\varphi\in\hintikkaSetOfPointedModelsInTeam MTk}\exists w (T(w)\wedge\varphi^{\mathtext{FO}}(w))\right)\wedge\left(\forall w (T(w)\implies\bigvee_{\varphi\in\hintikkaSetOfPointedModelsInTeam MTk}\varphi^{\mathtext{FO}}(w)\right),$$
 where $\varphi^{\mathtext{FO}}$ is the standard translation of $\varphi$ into first-order logic as mentioned above. This concludes the proof.
\end{proof}

\subsubsection{Proof of implication \ref{theorem:MTL expressiveness characterization}.(\ref{enum:main theorem:FO expressible and bisimulation invariant}) $\rightarrow$ \ref{theorem:MTL expressiveness characterization}.(\ref{enum:main theorem:bisimulation invariant and local})}

\begin{proof}
 Let $\varphi$ be the first-order formula expressing $P$. Since $\varphi$ is first-order, we know that $\varphi$ is Hanf-local\footnote{See Appendix~\ref{appendix:hanf locality} for a brief discussion of Hanf-locality}. Let $d$ be the Hanf-locality rank of $\varphi$. We show that $\varphi$ is $2d$-local. Therefore, let $M$ be a model with team $T$. We show that $\foStructure MT\models\varphi$ if and only if $\foStructure{\neighborhood T{2d}M}{T}\models\varphi$. Since $\varphi$ is bisimulation-invariant, it suffices to construct models $M_1$ and $M_2$ containing $T$ such that
 \begin{itemize}
  \item $(M_1,T)$ and $(M,T)$ are team-bisimilar,
  \item $(M_2,T)$ and $(\neighborhood T{2d}M,T)$ are team-bisimilar,
  \item $\foStructure{M_1}{T}\models\varphi$ if and only if $\foStructure{M_2}{T}\models\varphi$.
 \end{itemize}

 We first define $\mDiss$ as the model obtained from $M$ by disconnecting $\neighborhood T{2d}M$ from the remainder of the model, i.e., by removing all edges between $\neighborhood T{2d}M$ and $M\setminus\neighborhood T{2d}M$. Since $\mDiss$ is also obtained from $\neighborhood T{2d}M$ by adding the remainder of the model $M$ without connecting the added worlds to $\neighborhood T{2d}M$, it is obvious that $(\mDiss,T)\bisim{}(\neighborhood T{2d}M,T)$. We now define the models $M_1$ and $M_2$ such that $(M_1,T)\bisim{}(M,T)$ and $(M_2,T)\bisim{}(\mDiss,T)$ (and hence $(M_2,T)\bisim{}(\neighborhood T{2d}M,T)$) as follows:
 
 \begin{itemize}
  \item $M_1$ and $M_2$ are obtained from $M$ and $\mDiss$ by adding the exact same components: For each $w\in M$ (note that $M$ and $\mDiss$ have the exact same set of worlds), countably infinitely many copies of $\neighborhood w{2d}M$ and of $\neighborhood w{2d}{\mDiss}$ are added to both $M_1$ and $M_2$.
  \item for $n\in\mathbb N$, and $i\in\set{1,2}$, with $C^{\mathtext{DISS}}_{i,n}(w)$, we denote the $n$-th copy of $\neighborhood w{2d}{\mDiss}$ in $M_i$, the \emph{center} of $C^{\mathtext{DISS}}_{i,n}(w)$ is the copy of $w$ in $C^{\mathtext{DISS}}_{i,n}(w)$.
  \item for $n\in\mathbb N$, and $i\in\set{1,2}$, with $C^{\mathtext{CONN}}_{i,n}(w)$, we denote the $n$-th copy of $\neighborhood w{2d}{M}$ in $M_i$, the \emph{center} of $C^{\mathtext{CONN}}_{i,n}(w)$ is the copy of $w$ in $C^{\mathtext{CONN}}_{i,n}(w)$.
 \end{itemize}
 
 In the above, when we ``copy'' a part of a (Kripke) model, this includes copying the values of the involved propositional variables in these worlds (this is reflected in the resulting first-order models in the obvious way). However, we stress that the team $T$ is treated differently: The set $T$ is not enlarged with the copy operation, i.e., a copy of a world in $T$ is itself not an element of $T$.
 
 Since $M_1$ and $M_2$ are obtained from $M$ and $\mDiss$ by adding new components that are not connected to the original models, it clearly follows that $(M,T)$ and $(M_1,T)$ are team-bisimilar, and $(\mDiss,T)$ and $(M_2,T)$ are team-bisimilar. Note that each $w$ in the $M$-part of $M_1$ is the center of a $2d$-environment isomorphic to $C^{\mathtext{CONN}}_{2,n}(w)$, and each $w$ in the $\mDiss$-part of $M_2$ is the center of a $2d$-environment isomorphis to $C^{\mathtext{DISS}}_{1,n}(w)$. 
 
 Since the models $M$ ($\mDiss$) contain one copy of each $\neighborhood w{2d}M$ ($\neighborhood w{2d}{\mDiss}$), both $M_1$ and $M_2$ contain countably infinitely many copies of each $\neighborhood w{2d}M$ and each $\neighborhood w{2d}{\mDiss}$. Let $S_1$ be the subset of $M_1$ containing only the points from the $M$-part of $M_1$, plus the center of each $C^{\mathtext{CONN}}_{1,n}(w)$, and the center of each $C^{\mathtext{DISS}}_{1,n}(w)$. Similarly, let $S_2$ be the subset of $M_2$ containing only the points from the $\mDiss$-part of $M_2$ plus the centers of the added components. 
 
 Since $M_1$ and $M_2$ contain the same number of copies of each relevant neighborhood, there is a bijection $f\colon S_1\rightarrow S_2$ such that for each $w\in S_1$, the $2d$-neighborhoods of $w$ and $f(w)$ are isomorphic. Now $f$ can be modified such that for each $w\in M$ which has distance at most $d$ to a world in $T$, the value $f(w)$ is the corresponding world in the $\mDiss$-part of $M_2$. The thus-modified $f$ now satisfies that for each $w\in S_1$, the $d$-neighborhoods of $w$ and $f(w)$ are isomorphic. We can easily extend $f$ to worlds in $C^{\mathtext{DISS}}_{1,n}$ and $C^{\mathtext{CONN}}_{1,n}$ that are not the center of their respective components by mapping such a world $w$ in $C^{\mathtext{DISS}}_{1,n}$ to the copy of $w$ in $C^{\mathtext{DISS}}_{2,n}$, and analogously for $C^{\mathtext{CONN}}_{1,n}$.
 
 Therefore, we have constructed a bijection $f\colon M_1\rightarrow M_2$ such that for each $w\in M_1$, the $d$-neighborhood of $w$ in $M_1$ is isomorphic to the $d$-neighborhood of $w$ in $M_2$. Since $\varphi$ is Hanf-local with rank $d$, this implies that $\foStructure{M_1}{T}\models\varphi$ if and only if $\foStructure{M_2}{T}\models\varphi$, as required.
\end{proof}

The proof of this implication uses ideas from Otto's proof of van Benthem's classical theorem presented in~\cite{OTTO-ELEMENTARY-VAN-BENTHEM-PROOF-TR-2004}. However our proof is based on the Hanf-locality of first-order expressible properties, whereas Otto's proof uses Ehrenfeucht-Fra\"{\i}ss\'e games, as a consequence, our construction requires an infinite number of copies of each model due to cardinality reasons.

\subsubsection{Proof of implication \ref{theorem:MTL expressiveness characterization}.(\ref{enum:main theorem:bisimulation invariant and local}) $\rightarrow$ \ref{theorem:MTL expressiveness characterization}.(\ref{enum:main theorem:bounded bisimulation invariant})}

\begin{proof}
 Assume that $P$ is invariant under bisimulation, and $P$ is $k$-local for some $k\in\mathbb N$. We show that $P$ is invariant under $k$-bisimulation. Hence let $M_1,T_1\bisim kM_2,T_2$. Since $P$ is invariant under bisimulation, we can without loss of generality assume that $M_1$ and $M_2$ are directed forests, that $M_1$ contains only worlds connected to worlds in $T_1$, and analogously for $M_2$ and $T_2$. Since $P$ is also $k$-local, we can also assume that $M_1$ contains no world with a distance of more than $k$ to $T_1$, and analogously for $M_2$ and $T_2$. From these assumptions, it immediately follows that $M_1,T_1\bisim{}M_2,T_2$, and, since $P$ is invariant under bisimulation, this implies that $(M_1,T_1)\in P$ if and only if $(M_2,T_2)$ in $P$, as required.
\end{proof}

\section{Alternative logical characterisations for the bisimulation invariant properties}

Research on variants of (modal)
dependence logic has concentrated on logics defined in terms of independence and inclusion atoms. Analogously to \MDL, these logics are invariant under bisimulation but are strictly less expressive than \MTL \cite{DBLP:conf/aiml/KontinenMSV14}. On the other hand, extended modal dependence logic, \EMDL,  uses  dependence atoms but allows them to be  applied to \ML-formulas instead of just proposition symbols \cite{DBLP:conf/wollic/EbbingHMMVV13}. This variant is also known to be a proper sub-logic of \MTL being able to express  all downwards-closed properties that are invariant under $k$-bisimulation for some $k\in \mathbb{N}$, and equivalent to $\ML(\ovee)$ \cite{DBLP:conf/aiml/HellaLSV14}.

In this section we systematically study the expressive power of variants of  \EMDL replacing dependence atoms by independence and inclusion atoms. Depending on whether we also allow classical disjunction or not, this gives four logics, namely 
 $\EMIL$ (Extended Modal Independence Logic), $\EMILC$ (\EMIL extended with classical disjunction), $\EMINCL$ (Extended Modal Inclusion Logic) and $\EMINCLC$ (\EMINCL extended with classical disjunction).  We study the expressiveness of these logics, and show that while $\EMINCLC$ is as expressive as \MTL, for each of the other three logics there is an \MTL-expressible property that cannot be expressed in the logic. In the last section,  we also study the extension of \ML by   first-order definable generalised dependence  atoms, and show that the resulting logic---even without the addition of classical disjunction---is equivalent to  \MTL. 

\subsection{Extended Modal Independence Logic (\EMIL)}\label{sect:emil}

We first consider \emph{Extended Modal Independence Logic} (\EMIL). Syntactically, \EMIL extends \ML by the following: If $\overline P$, $\overline Q$, and $\overline R$ are finite sets of \ML-formulas, then $\overline P \bot_{\overline R}\overline Q$ is an \EMIL-formula. The semantics of this \emph{extended independence atom} are defined by lifting the definition for propositional variables given in~\cite{DBLP:conf/aiml/KontinenMSV14} to \ML-formulas as follows.

For a formula $\varphi$ and a world $w$, we write $\varphi(w)$ for the function defined as $\varphi(w)=1$ if $M,\set{w}\models\varphi$, and $\varphi(w)=0$ otherwise (the model $M$ will always be clear from the context). For a set of formulas $\overline F$ and worlds $w_1,w_2$, we write $w_1\equiv_{\overline F}w_2$ if $\varphi(w_1)=\varphi(w_2)$ for each $\varphi\in\overline F$.

$$
\begin{array}{lll}
M, T \models \overline P\bot_{\overline R}\ \overline{Q} & \Leftrightarrow & \forall w,w' \in T \colon w \equiv_{\overline R} w' \text { implies } \exists w'' \in T \colon \\
			&& w'' \equiv_{\overline{P}} w \text{ and } w'' \equiv_{\overline{Q}} w' \text{ and } w'' \equiv_{\overline{R}} w.
\end{array}		
		$$
The  extension of \EMIL by classical disjunction 	$\ovee$ is denoted by  $\EMILC$. 	

We will next show that  $\EMILC$ is a proper sub-logic of \MTL. The following lemma will be used in the proof.

\begin{lemma} \label{emptyR}
Let  $M=(W,R,\pi)$ be a Kripke model such that $R=\emptyset$ and $T\subseteq W$ a team. Then for all $\varphi \in \EMILC$ it holds that if $M,T\models \varphi$, then $M,\{w\}\models \varphi$ for all $w\in T$.
\begin{proof} A straight-forward induction on the construction of $\varphi$ using the facts that a singleton team trivially satisfies all independence atoms, and the empty team satisfies all formulas of  $\EMILC$.
\end{proof}
\end{lemma}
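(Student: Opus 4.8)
The plan is to prove the statement by a straightforward induction on the construction of the formula $\varphi \in \EMILC$, exploiting the fact that when $R=\emptyset$, no world has any successors, so the modal operators $\Box$ and $\Diamond$ degenerate in a way that keeps everything evaluated on the same edgeless model. The claim to be shown is that $M,T\models\varphi$ implies $M,\set w\models\varphi$ for every $w\in T$; note this is just the downward-direction half of flatness, and indeed for the \ML-fragment it is immediate from Proposition~\ref{proposition:flatness}. The only genuinely new connectives relative to \ML are the extended independence atom and the classical disjunction $\ovee$, so the induction really only needs care at those two cases plus the modal cases.

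First I would dispose of the atomic and propositional-connective cases. For $\varphi=p$ or $\varphi=\neg p$, if $M,T\models\varphi$ then by Definition~\ref{def:semanticsML} the condition ($T\subseteq\pi(p)$, resp.\ $T\cap\pi(p)=\emptyset$) is inherited by every singleton $\set w\subseteq T$. The conjunction case is trivial, and the splitjunction case $\varphi=\psi\vee\chi$ follows since any $w\in T$ lies in $T_1$ or $T_2$, and the induction hypothesis applied to the appropriate subteam (which still contains $w$) yields $M,\set w\models\psi$ or $M,\set w\models\chi$, hence $M,\set w\models\psi\vee\chi$ by taking the trivial split $\set w=\set w\cup\emptyset$ together with the stated fact that the empty team satisfies every \EMILC-formula. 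For the classical disjunction $\varphi=\psi\ovee\chi$, observe that $M,T\models\varphi$ means $M,T\models\psi$ or $M,T\models\chi$; applying the induction hypothesis to whichever disjunct holds gives $M,\set w\models\psi$ or $M,\set w\models\chi$, hence $M,\set w\models\varphi$.

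The extended independence atom is handled by the observation stated in the lemma: a singleton team trivially satisfies $\overline P\bot_{\overline R}\overline Q$, since the universally quantified $w,w'$ must both equal the unique world, and one may take $w''$ to be that world, which is then $\equiv_{\overline P}$, $\equiv_{\overline Q}$, and $\equiv_{\overline R}$ to itself. Thus the atom holds on $\set w$ regardless of whether it held on $T$, so the implication is vacuously fine. The modal cases are where I would use $R=\emptyset$ crucially: if $\varphi=\Diamond\psi$, then since no world has a successor, the first condition of the $\Diamond$-clause (every $w\in T$ has an $R$-successor in $T'$) can only be met when $T=\emptyset$; so if $T\neq\emptyset$ the formula $\Diamond\psi$ is not satisfied and the hypothesis $M,T\models\varphi$ forces $T=\emptyset$, making the conclusion vacuous. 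If $\varphi=\Box\psi$, then $T'=\emptyset$ and $M,T\models\Box\psi$ holds iff $M,\emptyset\models\psi$, which, since the empty team satisfies all \EMILC-formulas, holds automatically; the same computation gives $M,\set w\models\Box\psi$ for every $w$.

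I do not expect a serious obstacle here, as the argument is routine once the right observations are isolated. The one point requiring the slightest care is making sure the induction hypothesis is applied to a subteam still containing the world $w$ in the $\vee$ case, and the explicit invocation of the two facts highlighted in the lemma statement---that singleton teams satisfy all independence atoms and that the empty team satisfies every \EMILC-formula. The latter fact is what makes the $\Box$ and the $\Diamond$ (via $T=\emptyset$) cases go through smoothly. Thus the whole proof is the short induction the authors describe, and the most subtle ingredient is simply recognizing that an edgeless model collapses both modalities onto the empty team.
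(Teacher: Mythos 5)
Your proof is correct and is exactly the induction the paper's one-line proof alludes to, using the same two key facts (singleton teams trivially satisfy independence atoms; the empty team satisfies every \EMILC-formula) and correctly observing that $R=\emptyset$ makes the $\Diamond$ case vacuous and the $\Box$ case reduce to the empty team. No discrepancies.
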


\begin{restatable}{theorem}{theoremnotexpressibleinemil}\label{theorem:not expressible in emil} 
 $\EMDL\subsetneq \EMIL\subseteq  \EMILC \subsetneq \MTL$.
\end{restatable}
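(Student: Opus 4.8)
The plan is to establish the chain by treating the trivial middle inclusion and the two strict inclusions separately. The inclusion $\EMIL\subseteq\EMILC$ is immediate, since $\EMILC$ is by definition $\EMIL$ augmented with the classical disjunction $\ovee$. For $\EMDL\subseteq\EMIL$, I would show that every extended dependence atom is definable as an independence atom: the atom $\dep{\alpha_1,\dots,\alpha_{n-1},\alpha_n}$ with $\alpha_i\in\ML$ is logically equivalent to $\set{\alpha_n}\bot_{\set{\alpha_1,\dots,\alpha_{n-1}}}\set{\alpha_n}$. Unwinding the semantics of the independence atom with $\overline P=\overline Q=\set{\alpha_n}$ and $\overline R=\set{\alpha_1,\dots,\alpha_{n-1}}$, the condition reduces to: any two worlds of $T$ agreeing on all of $\alpha_1,\dots,\alpha_{n-1}$ also agree on $\alpha_n$ (take $w''=w$ for the converse), which is exactly functional dependence.

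To obtain strictness $\EMDL\subsetneq\EMIL$, I would exploit that every $\EMDL$-definable property is downwards closed (recall $\EMDL\equiv\ML(\ovee)$), whereas independence atoms need not be. Concretely, the atom $\set{p}\bot_{\emptyset}\set{q}$ is satisfied by a team realising all four combinations of the truth values of $(p,q)$ on an edgeless model, but fails on the subteam $\set{w_{00},w_{11}}$ containing only the ``diagonal'': for $w_{00}$ and $w_{11}$ the required witness with the $p$-value of $w_{00}$ and the $q$-value of $w_{11}$ is absent. Hence the property it defines is not downwards closed, and therefore not expressible in $\EMDL$.

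For $\EMILC\subseteq\MTL$, I would first verify, by a routine induction on the construction of the formula entirely analogous to Proposition~\ref{proposition:team bisimulation implies equivalence}, that every $\EMILC$ formula of modal depth at most $k$ is invariant under $k$-bisimulation. The only genuinely new case is the independence atom: since all the $\ML$-formulas in $\overline P,\overline Q,\overline R$ have modal depth at most $k$, Proposition~\ref{prop:bisimulation and equivalence} guarantees that $k$-bisimilar worlds agree on each of them, so the relations $\equiv_{\overline P}$, $\equiv_{\overline Q}$, $\equiv_{\overline R}$ and the existence of the required witness worlds transfer across the team $k$-bisimulation. Once $k$-bisimulation invariance is known, the equivalence of items~\ref{enum:main theorem:MTL expressible} and~\ref{enum:main theorem:bounded bisimulation invariant} of Theorem~\ref{theorem:MTL expressiveness characterization} immediately yields $\MTL$-definability of every $\EMILC$-expressible property (the empty team being handled as discussed after the definition of team properties).

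The strictness $\EMILC\subsetneq\MTL$ is the most interesting step, and Lemma~\ref{emptyR} is precisely the tool for it. Consider the $\MTL$ formula $\varphi:=\existsWorld p\wedge\existsWorld\neg p$, which expresses that the team contains both a world satisfying $p$ and a world falsifying $p$. On the edgeless model $M=(W,\emptyset,\pi)$ with team $T=\set{w_1,w_2}$ where $w_1\in\pi(p)$ and $w_2\notin\pi(p)$, we have $M,T\models\varphi$ but $M,\set{w_1}\nmodels\varphi$. If $\varphi$ were equivalent to some $\psi\in\EMILC$, then Lemma~\ref{emptyR} applied to this edgeless $M$ would force $M,\set{w_1}\models\psi$, contradicting $M,\set{w_1}\nmodels\varphi$. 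Hence no $\EMILC$ formula defines the property expressed by $\varphi$, which completes the strict inclusion. I expect the main obstacle to be organising the bisimulation-invariance induction for the independence atom cleanly; the remaining arguments are short once the translation and the two separating examples are in place.
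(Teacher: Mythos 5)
Your proposal is correct and follows essentially the same route as the paper: the first inclusion via translating dependence atoms into independence atoms, strictness via downwards closure of \EMDL, the inclusion $\EMILC\subseteq\MTL$ via bisimulation invariance together with Theorem~\ref{theorem:MTL expressiveness characterization}, and the final strictness via Lemma~\ref{emptyR} on an edgeless two-world model. The only (harmless) deviations are that you spell out details the paper leaves implicit and use the separating property $\existsWorld p\wedge\existsWorld\neg p$ where the paper uses $\existsWorld p$ with the singleton $\set{w_2}$ --- both arguments are instances of the same use of Lemma~\ref{emptyR}.
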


\begin{proof} The first inclusion follows from the fact that dependence atoms can be expressed by independence atoms. The inclusion is strict since  \EMDL is downwards-closed and \EMIL is not.

For the last inclusion, note that every property expressible in $\EMILC$ is invariant under bisimulation, hence  it follows that \MTL can express every $\EMILC$-expressible property due to Theorem~\ref{theorem:MTL expressiveness characterization}. For the strictness, we show that there is a team property that is invariant under $0$-bisimulation and which cannot be expressed in $\EMILC$. Let $P$ be the property
$$P:=\set{(M,T)\ \vert\ M,T\models\existsWorld p},$$
 i.e., the class of $(M,T)$  such that $T$ contains at least one world in which  $p$ is satisfied. 
 Consider the model $M$ with worlds $w_1$ and $w_2$, where $p$ is true in $w_1$ and false in $w_2$, and the accessability relation $R=\emptyset$. Let $T_1=\set{w_1,w_2}$, and let $T_2=\set{w_2}$. Obviously, $(M,T_1)\in P$ and $(M,T_2)\notin P$. By Lemma \ref{emptyR} for all $\EMILC$-formulas $\varphi$, if  $M,T_1\models\varphi$, then $M,T_2\models\varphi$. This shows  that there is no $\EMILC$-formula expressing $P$.
\end{proof}

\subsection{Extended Modal Inclusion Logic}\label{sect:emincl}

Analogously to \EMIL, we now define \emph{Extended Modal Inclusion Logic}, \EMINCL. \EMINCL extends the syntax of \ML with the following rule: If $\varphi_1,\dots,\varphi_n$ and $\psi_1,\dots,\psi_n$ are \ML-formulas, then $(\varphi_1,\dots,\varphi_n)\subseteq(\psi_1,\dots,\psi_n)$ is an \EMINCL-formula. The semantics of this \emph{inclusion atom} are lifted from the first-order setting \cite{DBLP:journals/apal/Galliani12} to the extended modal case:

\begin{quote}
$M,T\models(\varphi_1,\dots,\varphi_n)\subseteq(\psi_1,\dots,\psi_n)$ if for every world $w\in T$ there is a world $w'\in T$ such that $\varphi_i(w)=\psi_i(w')$ for each $i\in\set{1,\dots,n}$.
\end{quote}
The extension of \EMINCL by classical disjunction 	$\ovee$ is denoted by  $\EMINCLC$.

Analogously to first-order inclusion logic  \cite{DBLP:conf/csl/GallianiH13},
the truth of \EMINCL-formulas is preserved under unions of teams. Hence we get the following  result. 

\begin{restatable}{theorem}{EMINCLlessthanMTL}
 \EMINCL is strictly less expressive than \MTL.
\end{restatable}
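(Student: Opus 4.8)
The plan is to establish the two directions of the strict inequality separately. The inclusion $\EMINCL\subseteq\MTL$ follows immediately from the machinery already in place: every \EMINCL-formula is invariant under bisimulation (this is asserted for the inclusion-based logics at the start of Section~4 and follows by a routine induction on formula structure), so by Theorem~\ref{theorem:MTL expressiveness characterization}, specifically the implication \ref{enum:main theorem:bounded bisimulation invariant} $\rightarrow$ \ref{enum:main theorem:MTL expressible}, every \EMINCL-expressible property is already \MTL-expressible. Here I would need the slightly stronger fact that \EMINCL-formulas are invariant under $k$-bisimulation for $k$ the modal depth of the formula, which again is a standard induction paralleling Proposition~\ref{proposition:team bisimulation implies equivalence}.

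The strictness is where the real content lies, and the key is the semantic closure property flagged in the statement: the truth of \EMINCL-formulas is preserved under unions of teams. That is, I would first prove the lemma that if $M,T_1\models\varphi$ and $M,T_2\models\varphi$ for $\varphi\in\EMINCL$, then $M,T_1\cup T_2\models\varphi$ (and more generally that \EMINCL is closed under arbitrary unions of teams). This is proved by induction on $\varphi$, exactly analogous to the first-order inclusion-logic argument of~\cite{DBLP:conf/csl/GallianiH13}: the atomic, conjunction, and modal cases are direct, the splitjunction case splits each $T_i$ compatibly and takes unions componentwise, and the inclusion atom itself is manifestly union-closed because a witness $w'\in T_i$ for a world $w\in T_i$ remains a witness in the larger team $T_1\cup T_2$.

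With union-closure in hand, I would exhibit a single \MTL-expressible property that violates it. The natural candidate is essentially the negation of the ``exists a world satisfying $p$'' property used in the \EMIL case, or more directly a property asserting that \emph{not all} worlds of the team satisfy $p$ together with \emph{not all} falsify it---something like ``$T$ contains a $p$-world and also contains a $\lnot p$-world is false'' engineered to break under union. Concretely I expect to use the \MTL-formula $\classNegation\existsWorld p$, expressing that no world of $T$ satisfies $p$: taking the model $M$ with two worlds $w_1\models p$, $w_2\models\lnot p$ and empty accessibility relation, the singleton $T_2=\set{w_2}$ satisfies this property but $T_1=\set{w_1}$ does not, and since $T_1\cup T_2=\set{w_1,w_2}$ also fails it, one checks that no union-closed logic can isolate a property that holds on $T_2$ yet fails on $T_1\cup T_2$ when it also holds on some further piece---so I would instead pick a property that holds on two teams but fails on their union. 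The cleanest choice is the property $P=\set{(M,T)\ \vert\ M,T\models\existsWorld p\ \text{and}\ M,T\models\existsWorld\lnot p}$ negated appropriately, or simply the property ``all worlds in $T$ agree on $p$'': this holds on $\set{w_1}$ and on $\set{w_2}$ but fails on $\set{w_1,w_2}$, hence is not union-closed and so not \EMINCL-expressible, while it is plainly bisimulation-invariant and \MTL-definable.

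The main obstacle, and the step requiring the most care, is selecting the witnessing property so that it is simultaneously (a) demonstrably \MTL-expressible and bisimulation-invariant, and (b) provably outside \EMINCL via the union-closure lemma; the union-closure direction of the proof is routine once stated correctly, but one must be careful that the chosen $P$ genuinely fails union-closure on bisimulation-distinct teams (so the counterexample is not an artifact that bisimulation already collapses), which is why an edgeless model with two worlds of opposite $p$-value is the safe choice.
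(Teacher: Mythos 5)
Your proposal is correct and follows essentially the same route as the paper: the paper also proves strictness by invoking union-closure of \EMINCL (citing \cite{DBLP:conf/csl/GallianiH13}) and exhibiting a $0$-bisimulation-invariant property that holds on two singleton teams of an edgeless two-world model but fails on their union, the only difference being that the paper uses the property ``exactly one of $\existsWorld p_1$, $\existsWorld p_2$ holds'' where you use ``all worlds of $T$ agree on $p$'' (i.e.\ the constancy atom $\dep{p}$), both of which work. Your remark that the witnessing teams should be bisimulation-distinct is a sensible sanity check, and your final choice satisfies all the required conditions.
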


\begin{proof}
 We show that there is a $0$-bisimulation invariant property that cannot be expressed with \EMINCL. For this, let $P$ be the property 
 $$\set{(M,T)\ \vert\ \mathtext{there is exactly one }i\in\set{1,2}\mathtext{ with }M,T\models\existsWorld p_i}.$$ 
 Clearly (and also due to Theorem~\ref{theorem:MTL expressiveness characterization}), $P$ is invariant under $0$-bisimulation. Now, let $M$ be a model with worlds $w_1$ and $w_2$ such that in $w_i$, the variable $p_i$ is true and $p_{3-i}$ is false. Let $T_1=\set{w_1}$, and $T_2=\set{w_2}$. Then, by construction, $(M,T_1)\in P$ and $(M,T_2)\in P$, but $(M,T_1\cup T_2)\notin P$. Now assume that $\varphi$ is an \EMINCL-formula that expresses $P$.
 
 Then, in particular $M,T_1\models\varphi$, $M,T_2\models\varphi$ and $M,(T_1\cup T_2)\nmodels\varphi$. However, it easily follows that \EMINCL is \emph{union}-closed, i.e. if $M,T_1\models\varphi$ and $M,T_2\models\varphi$, then also $M,(T_1\cup T_2)\models\varphi$ (see, e.g.,~\cite{DBLP:conf/csl/GallianiH13}, the property trivially transfers to the modal setting). Therefore, we have a contradiction. 
\end{proof}

Next we want to show that \EMINCLC is as powerful as \MTL.

\begin{theorem}\label{theorem:EMINCLC expressiveness}
 Let $P$ be a team property. Then the following are equivalent:
 \begin{enumerate}
  \item\label{enum:theorem:EMINCLC expressiveness:bisimulation invariance} $P$ is invariant under $k$-bisimulation.
  \item\label{enum:theorem:EMINCLC expressiveness:EMINCLC expressiveness} There is an \EMINCLC-formula $\varphi$ with $\md\varphi=k$ that characterizes $P$.
 \end{enumerate}
\end{theorem}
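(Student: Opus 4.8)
The plan is to prove both directions using the machinery already developed for the main theorem. The direction \ref{enum:theorem:EMINCLC expressiveness:EMINCLC expressiveness} $\rightarrow$ \ref{enum:theorem:EMINCLC expressiveness:bisimulation invariance} should be the routine one: I would show that every \EMINCLC-formula $\varphi$ with $\md\varphi = k$ is invariant under $k$-bisimulation, by an induction on the construction of $\varphi$ that parallels the proof of Proposition~\ref{proposition:team bisimulation implies equivalence}. The only genuinely new cases are the extended inclusion atom and classical disjunction $\ovee$. For the inclusion atom $(\varphi_1,\dots,\varphi_n)\subseteq(\psi_1,\dots,\psi_n)$, I would note that each $\varphi_i,\psi_i$ is an \ML-formula whose truth at a single world is preserved under $k$-bisimilarity of worlds (Proposition~\ref{prop:bisimulation and equivalence}), so the profile $(\varphi_1(w),\dots,\varphi_n(w))$ depends only on the $k$-bisimulation type of $w$; since $k$-team-bisimilarity matches up bisimulation types between $T_1$ and $T_2$ in both directions, the ``for every world there is a witness'' condition transfers. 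The case $\ovee$ is immediate since $\varphi\ovee\psi$ is definable from classical negation and $\wedge$, already handled in Proposition~\ref{proposition:team bisimulation implies equivalence}.

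The interesting direction is \ref{enum:theorem:EMINCLC expressiveness:bisimulation invariance} $\rightarrow$ \ref{enum:theorem:EMINCLC expressiveness:EMINCLC expressiveness}. Here I would exploit the explicit formula built in the proof of \ref{enum:main theorem:bounded bisimulation invariant} $\rightarrow$ \ref{enum:main theorem:MTL expressible}, namely $\varphi_P = \ovee_{(M,T)\in P}\hintikkaModelWithTeam MTk$. Since \EMINCLC contains \ML and the classical disjunction $\ovee$ by definition, it suffices to express each $\hintikkaModelWithTeam MTk$ in \EMINCLC without using the full power of classical negation $\classNegation$. Recall $\hintikkaModelWithTeam MTk = \left(\bigwedge_{\varphi\in\hintikkaSetOfPointedModelsInTeam MTk}\existsWorld\varphi\right)\wedge\left(\bigvee_{\varphi\in\hintikkaSetOfPointedModelsInTeam MTk}\varphi\right)$. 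The second conjunct is a plain \ML-splitjunction and thus directly available. The main obstacle is therefore to express the ``existential'' part $\existsWorld\varphi$ (asserting that some world in the current team satisfies the \ML-formula $\varphi$) using an inclusion atom rather than classical negation.

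The key idea for the obstacle is that the inclusion atom can encode existence of a world of a given bisimulation type. For an \ML-formula $\varphi$ with $\md\varphi\le k$, I would show that $\existsWorld\varphi$ is expressible in \EMINCL using a unary inclusion atom of the form $(\top)\subseteq(\varphi)$ — that is, asserting that the constant-true profile value is realized by some world satisfying $\varphi$ — or a suitable variant such as $(\varphi)\subseteq(\top)$ combined with a tautological companion, chosen so that the inclusion atom's ``for every world $w$ there is a witness $w'$ with matching profile'' semantics collapses precisely to ``some world of the team satisfies $\varphi$.'' Concretely, since $\varphi$ takes only values $0,1$ across the team, the inclusion atom $(p_0\ovee\neg p_0)\subseteq(\varphi)$ forces every world's trivially-true profile to be matched by a world where $\varphi$ holds, which is exactly $\existsWorld\varphi$. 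I must verify that this candidate atom genuinely has modal depth $k$ (inherited from $\varphi$) and that assembling the conjunction over the finite set $\hintikkaSetOfPointedModelsInTeam MTk$ keeps the modal depth at $k$; the final classical disjunction over all $(M,T)\in P$ is finite by the finiteness of $\hintikkaFormulasLevel k$, as argued before. Once $\existsWorld\varphi$ is captured by an inclusion atom, the entire $\varphi_P$ lies in \EMINCLC, and bisimulation invariance of $P$ together with Proposition~\ref{proposition:team bisimulation characterization with hintikka formulas} shows it expresses $P$ exactly as in the \MTL case.
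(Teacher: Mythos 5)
Your proposal is correct and follows essentially the same route as the paper: the direction from formula to invariance is dismissed as a straightforward extension of Proposition~\ref{proposition:team bisimulation implies equivalence}, and for the converse the paper likewise reduces to expressing each $\hintikkaModelWithTeam MTk$ and captures $\existsWorld\varphi$ by an inclusion atom with a tautology on the left, namely $(x\vee\neg x)\subseteq(\varphi)$. The only cosmetic difference is that you write the left argument as $p_0\ovee\neg p_0$, which is not an \ML-formula and hence not a syntactically legal argument of the extended inclusion atom; replacing $\ovee$ by the splitjunction $\vee$ fixes this without changing the semantics at singleton teams.
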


\begin{proof}
 The direction from~\ref{enum:theorem:EMINCLC expressiveness:EMINCLC expressiveness} to~\ref{enum:theorem:EMINCLC expressiveness:bisimulation invariance} follows by a straight-forward extension of the proof of Proposition~\ref{proposition:team bisimulation implies equivalence}. For the converse, assume that $P$ is invariant under $k$-bisimulation. From the proof of Theorem~\ref{theorem:MTL expressiveness characterization}, we know that it suffices to construct an \EMINCLC-formula $\varphi$ that is equivalent to the \MTL-formula $\ovee_{(M,T)\in P}\hintikkaModelWithTeam MTk$. Since the $\ovee$-operator is available in $\EMINCLC$, it suffices to show how to express the formula $\hintikkaModelWithTeam MTk$ for each model $M$ and team $T$ as an \EMINCLC-formula. Recall that 
 
 $$\hintikkaModelWithTeam MTk=\left(\bigwedge_{\varphi\in\hintikkaSetOfPointedModelsInTeam MTk}\existsWorld\varphi\right)\wedge\left(\bigvee_{\varphi\in\hintikkaSetOfPointedModelsInTeam MTk}\varphi\right).$$
 
 The second conjunct already is an \EMINCLC-formula, hence it suffices to show how $\existsWorld\varphi$ can be expressed for an \ML-formula $\varphi$. As discussed earlier, $M,T\models\existsWorld\varphi$ for an \ML-formula $\varphi$ if and only if there is a world $w\in T$ with $M,\set w\models\varphi$. Hence from the semantics of the inclusion atom, it is clear that $\existsWorld\varphi$ is equivalent to $(x\vee\neg x)\subseteq(\varphi)$. This concludes the proof.
\end{proof}

\subsection{\ML with  FO-definable generalized dependence atoms}

In this section we show that \MTL, and the bisimulation invariant properties, can be captured as the extension of \ML by all  generalized dependence atoms definable in first-order logic without identity.
The notion of a generalized dependence atom in the modal context was introduced in  \cite{DBLP:conf/aiml/KontinenMSV14}. A  closely related notion was introduced and studied in the first-order context in \cite{ku13}. The semantics of a generalized dependence atom $D$ is determined essentially by a property  of teams. 

In the following we are interested in generalized dependence atoms definable by first-order formulae, defined as follows:  Suppose that  $D$ is  an atom of width $n$, that is, an atom that applies to $n$ propositional variables (for example the atom in \eqref{modaldepatom}). We say that $D$ is \FO-definable if there exists a \FO-sentence $\phi$ over signature $\langle A_{1}, \dots, A_{n}\rangle$ such that for all Kripke models $M= (W,R,\pi)$ and teams $T$,
		$$M, T \models D(p_1, \dots, p_n) \;\Longleftrightarrow\; \mathcal{A} \models \phi,$$
	where $\mathcal{A}$ is the first-order structure  with universe $T$ and relations $A_{i}^{\mathcal{A}}$ for $1 \leq i \leq n$, where for all $w\in T$, $w \in A_{i}^{\mathcal{A}} \Leftrightarrow p_i \in \pi(w)$.

In our ``extended'' setting  the arguments to  a generalized dependence atom $D(\varphi_1,\dots,\varphi_n)$ can be arbitrary \ML-formulas instead of propositional variables. Hence the relation $A_i$ is now interpreted by the worlds of $T$ in which $\varphi_i$ is satisfied. We denote by  \MLFO the extension of  \ML by all generalized dependence atoms $D$ that are \FO-definable without identity. 

\begin{theorem}\label{thm:MTL=ML^FO}
 \MLFO is equally expressive as \MTL.
\end{theorem}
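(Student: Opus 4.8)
The plan is to prove $\MLFO \equiv \MTL$ by establishing both inclusions, using the machinery already developed for Theorem~\ref{theorem:MTL expressiveness characterization}. The easier direction is $\MLFO \subseteq \MTL$: I would argue that every $\MLFO$-formula is invariant under $k$-bisimulation for some $k$, and then invoke Theorem~\ref{theorem:MTL expressiveness characterization} (specifically the implication \ref{enum:main theorem:bounded bisimulation invariant} $\rightarrow$ \ref{enum:main theorem:MTL expressible}) to conclude $\MTL$-expressibility. Bisimulation invariance follows by induction on formula structure, exactly as in Proposition~\ref{proposition:team bisimulation implies equivalence}; the only new case is the generalized dependence atom $D(\varphi_1,\dots,\varphi_n)$. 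Here the key observation is that the truth of $D$ depends only on the first-order structure $\mathcal A$ whose universe is $T$ and whose relations record which of the $\ML$-formulas $\varphi_i$ hold at each world. Since each $\varphi_i$ is an $\ML$-formula, and $\ML$-formulas are bisimulation-invariant on singletons (Proposition~\ref{prop:bisimulation and equivalence}), two $k$-bisimilar teams induce the same multiset of bitvectors $(\varphi_1(w),\dots,\varphi_n(w))$, where $k$ is the maximal modal depth occurring among the $\varphi_i$; hence the associated structures $\mathcal A$ are isomorphic and satisfy the same $\FO$-sentence $\phi$ defining $D$.

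The harder direction is $\MTL \subseteq \MLFO$. Again I would use the normal form from the proof of Theorem~\ref{theorem:MTL expressiveness characterization}: every $\MTL$-formula expresses a $k$-bisimulation-invariant property, and such a property is expressed by $\ovee_{(M,T)\in P}\hintikkaModelWithTeam MTk$. As in the proof of Theorem~\ref{theorem:EMINCLC expressiveness}, it therefore suffices to express in $\MLFO$, first, the classical disjunction $\ovee$, and second, each building block $\existsWorld\varphi$ for an $\ML$-formula $\varphi$ (the ordinary disjunction $\bigvee_{\varphi\in\hintikkaSetOfPointedModelsInTeam MTk}\varphi$ is already an $\ML$-formula). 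The plan is to realize both of these using $\FO$-definable generalized dependence atoms applied to $\ML$-formula arguments. The operator $\existsWorld\varphi$ is the simplest: it is captured by a width-$1$ atom $D_\existsWorld(\varphi)$ whose defining $\FO$-sentence over $\langle A_1\rangle$ is $\exists x\, A_1(x)$, since this asserts exactly that some world of the team satisfies $\varphi$.

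The main obstacle is expressing classical disjunction $\ovee$ without identity, since $\MLFO$ does not include $\ovee$ as a primitive and the definition restricts to $\FO$-sentences \emph{without identity}. The difficulty is that $\ovee$ is a binary connective on formulas, whereas a generalized dependence atom takes $\ML$-formula arguments and evaluates a fixed $\FO$-property of the team. One natural route is to push all classical negations inward so that the whole formula is built from $\existsWorld$-blocks combined by $\ovee$, $\wedge$, and $\vee$, and then to encode a top-level Boolean combination of $\existsWorld\psi_j$-statements by a single generalized dependence atom $D(\psi_1,\dots,\psi_m)$ whose defining sentence is the corresponding Boolean combination of the sentences $\exists x\, A_j(x)$; since these are all sentences without identity, so is their Boolean combination. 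I would need to verify carefully that every $\MTL$-formula (equivalently, every formula of the form $\varphi_P$ above) can be rewritten as a single generalized dependence atom applied to $\ML$-arguments, or as an $\ML$-formula combined with such atoms via the connectives already present in $\MLFO$. The key technical check is that classical disjunction over the Hintikka building blocks $\hintikkaModelWithTeam MTk$ collapses, after this rewriting, into a Boolean combination of $\existsWorld$-style and flatness-style conditions on the team, each of which is expressible by an identity-free $\FO$-sentence; confirming that no use of equality is needed is the crux of the argument.
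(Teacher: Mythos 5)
Your hard direction $\MTL\subseteq\MLFO$ follows the paper's proof essentially verbatim: the paper also packs the entire formula $\ovee_{(M,T)\in P}\hintikkaModelWithTeam MTk$ into a \emph{single} generalized atom applied to the relevant Hintikka formulas, with identity-free defining sentence $\bigvee_{k}\bigl(\forall x\,(A^k_1(x)\vee\dots\vee A^k_n(x))\wedge\bigwedge_{j}\exists x\,A^k_j(x)\bigr)$; the ``crux'' you flag does go through, since the splitjunction conjunct of $\hintikkaModelWithTeam MTk$ reduces via Proposition~\ref{lemma:big splitjunction characterization} to the identity-free condition $\forall x\bigvee_i A^k_i(x)$ and each $\existsWorld\varphi^k_j$ to $\exists x\,A^k_j(x)$.

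There is, however, a genuine error in your easy direction. You claim that two $k$-bisimilar teams induce the same \emph{multiset} of bitvectors $(\varphi_1(w),\dots,\varphi_n(w))$ and hence isomorphic structures $\mathcal A$. This is false: team $k$-bisimilarity only requires that each world of one team be $k$-bisimilar to \emph{some} world of the other, so the two teams may have different cardinalities and multiplicities. For instance, a singleton team $\set{w}$ is $k$-bisimilar to a two-element team both of whose worlds are $k$-bisimilar to $w$; the induced structures are not isomorphic. What is preserved is only the \emph{set} of realized bitvectors, i.e.\ which atomic types of $\mathcal A$ are nonempty. As written, your argument would equally ``prove'' bisimulation invariance of the atom with defining sentence $\exists x\exists y\,(x\neq y)$ (``the team has at least two elements''), which is not invariant. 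The missing ingredient is exactly the restriction to \FO-sentences \emph{without identity}: over a purely monadic signature, every identity-free sentence is equivalent to a Boolean combination of statements ``atomic type $\tau$ is realized'', and these are determined by the set of realized bitvectors, which $k$-bisimilar teams do share. Your proof never invokes identity-freeness in this direction, which is the only place where it is actually needed (the paper itself delegates this direction to the invariance argument of \cite{DBLP:conf/aiml/KontinenMSV14}). The step is fixable, but as stated the justification is wrong.
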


\begin{proof}
In the proof of Theorem~6.8 in~\cite{DBLP:conf/aiml/KontinenMSV14} it is showed that \MLFO is invariant under bisimulation in the case where  generalised atoms may be applied only to propositional variables.  The proof easily extends to the setting where arbitrary \ML-formulas may appear as arguments to a generalised dependence atom. Therefore,  \MLFO is not more expressive than \MTL. For the converse, let $P$ be a property that can be expressed in \MTL. From Theorem~\ref{theorem:MTL expressiveness characterization}, it follows that $P$ is invariant under $k$-bisimulation, and from the proof of Theorem~\ref{theorem:MTL expressiveness characterization}, we know that it suffices to express the formula $\ovee_{(M,T)\in P}\hintikkaModelWithTeam MTk$ in \MLFO. We can do this with the following first-order definable atom (by suitably choosing the parameters $n,m\in \mathbb{N}$):
 
 \begin{quote}
 $M,T\models D(\varphi^1_1,\dots,\varphi^1_n,\varphi^2_1,\dots,\varphi^2_n,\dots,\varphi^m_1,\dots,\varphi^m_n)$ if and only if there is some $k\in\set{1,\dots,m}$ such that each $w\in T$ satisfies some $\varphi^k_i$, and for each $j\in\set{1,\dots,n}$, there is some $w\in T$ that satisfies $\varphi^k_j$.
 \end{quote}
 
The atom $D$ can now  be FO-defined by replacing the exists/for all quantifiers on the indices with disjunctions/conjunctions:
 
  $$\displaystyle\bigvee_{k\in\set{1,\dots,m}}
 \left(
 \forall x\left( A^k_{1}(x)\vee\dots\vee A^k_{n}(x)\right)
 \wedge
 \bigwedge_{j\in\set{1,\dots,n}}(\exists x\ A^k_{j}(x))\right)$$
 Then, the atom $D$ applied to the formulas in $\hintikkaModelWithTeam MTk$ for all $(M,T)\in P$ gives a formula expressing $P$.
\end{proof}

\section{Conclusion}

Our results show that, with respect to expressive power, modal team logic is a natural upper bound for all the logics studied so far in the area of modal team semantics. Overall, an interesting picture of the characterization of the expressiveness of modal logics in terms of bisimulation emerges: Let us say that ``invariant under bounded bisimulation'' means invariant under $k$-bisimulation for some finite $k$. Then we have the following hierarchy of logics:
\begin{itemize}
 \item Due to van Benthem's theorem~\cite{BenthemBook}, \ML can exactly express all properties of pointed models that are FO-definable and invariant under bisimulation.
 \item Due to~\cite{DBLP:conf/aiml/HellaLSV14}, \ML with team semantics and extended with classical disjunction $\ovee$ can exactly express all properties of teams that are invariant under bounded bisimulation and additionally downwards-closed.
 \item Our result shows that \ML with team semantics and extended with classical negation $\classNegation$ can exactly express all properties of teams that are invariant under bounded bisimulation.
\end{itemize}

A number of open questions in the realm of modal logics with team semantics remain:
\begin{enumerate}
\item 
In the proof of Theorem~\ref{thm:MTL=ML^FO}, for each $k$, there is only a finite width of the $D$-operator above needed to express all properties that are invariant under $k$-bisimulation. However, the theorem leaves open the question whether  there is a ``natural'' atom $D$ or an atom with ``restricted width'' that gives the entire power of \MTL.\item 
Can we axiomatize \MTL? Axiomatizability of  sublogics  of \MTL has been studied, e.g.,  in  \cite{Yangthesis} and \cite{SanoV14}.
\item While we mentioned a number of complexity results on modal dependence logic and some of its extensions, this issue remains unsettled for full \MTL. In particular, what is the complexity of satisfiability and validity of \MTL?
\end{enumerate}

\begin{appendix}

\section{Hanf-locality of first-order logic}\label{appendix:hanf locality}

In the proof of the implication Proof of implication \ref{enum:main theorem:FO expressible and bisimulation invariant} $\rightarrow$ \ref{enum:main theorem:bisimulation invariant and local} of Theorem~\ref{theorem:MTL expressiveness characterization}, we used Hanf-locality of first-order formualas. In the following, we briefly introduce the relevant definitions and state the main result. Our presentation is based on~\cite[Section 4]{Libkin-ELEMENTS-FINITE-MODEL-THEORY-SPRINGER-2004}.

Since in this paper, we only consider properties of Kripke models, our first-oder models are defined over a relational signature that includes (at most) a single binary relation $R$ representing the Kripke relation of our models, a unary relation $T$ representing a team in a Kripke model, and for each propositional variable, a unary relation representing the worlds in which the variable is true. In the following, we assume a fixed finite set of variables.

For a world $w\in M$, and a natural number $r$, the \emph{radius $r$ ball} around $w$ is the set of worlds $w'$ such that there is an (undirected) path in $M$ between $w$ and $w'$ with length at most $r$. The $r$-neighborhood of $w$ in $M$, denoted with $N^M_r(w)$, is the model $M$ restricted to the radius $r$ ball around $w$, with an additional constant interpreted as $w$.

For two models $M_1$ and $M_2$, we write $M_1\leftrightarrows d M_2$, if there is a bijection $f\colon W_1\rightarrow W_2$ such that for every $c\in W_1$, $N^{M_1}_d(c)\cong N^{M_2}_d(f(c))$.

\begin{definition}
 A property $P$ of Kripke models is \emph{Hanf-local} if there is some $d\ge0$ such that for every Kripke models $M_1$ and $M_2$ with $M_1\leftrightarrows_d M_2$, we have that $M_1$ has property $P$ if and only if $M_2$ has property $P$. The smallest number $d$ for which this is true is the Hanf-locality rank of $P$.
\end{definition}

In our proof, we use the property that first-order logic is Hanf-local:

\begin{theorem}\cite{hanf65}.
 If $P$ is a property of Kripke models that can be expressed in first-oder logic, then $P$ is Hanf-local.
\end{theorem}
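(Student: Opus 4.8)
The plan is to prove the classical Hanf locality theorem via the Ehrenfeucht--Fra\"\i ss\'e (EF) characterization of first-order equivalence. Fix a first-order sentence $\phi$ over the signature $\sigma_{\propVar}$ that defines $P$, and let $r$ be its quantifier rank. I claim that $P$ is Hanf-local with locality rank $d=3^r$. Concretely, I will show that $M_1\leftrightarrows_{3^r}M_2$ implies that $M_1$ and $M_2$ agree on every sentence of quantifier rank at most $r$, and hence on $\phi$, so that $M_1\in P\iff M_2\in P$. By the standard EF theorem, for this it suffices to produce a winning strategy for the duplicator in the $r$-round EF game played on $M_1$ and $M_2$.

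The invariant the duplicator maintains is geometric: after $i$ rounds, with chosen tuples $\bar a=(a_1,\dots,a_i)$ in $M_1$ and $\bar b=(b_1,\dots,b_i)$ in $M_2$, the map $a_j\mapsto b_j$ extends to an isomorphism between the radius-$3^{r-i}$ neighborhood of $\bar a$ and that of $\bar b$ (the neighborhood of a tuple being the union of the neighborhoods of its entries). At the start ($i=0$) this is vacuous. At the end ($i=r$) the radius is $3^0=1$, which already forces $a_j\mapsto b_j$ to be a partial isomorphism, since radius-$1$ neighborhoods determine all atomic facts about the chosen points, including the binary edge relation and the unary predicates; this is exactly the duplicator's winning condition.

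The inductive step, in which the spoiler plays a point (say $a_{i+1}$ in $M_1$, the other side being symmetric), splits into two cases governed by the recursion $3^{r-i}=3\cdot 3^{r-(i+1)}$. In the \emph{close} case, $a_{i+1}$ lies within distance $2\cdot 3^{r-(i+1)}$ of some previous $a_j$; then its radius-$3^{r-(i+1)}$ ball sits inside the radius-$3^{r-i}$ neighborhood of $\bar a$, and the duplicator transports $a_{i+1}$ through the isomorphism supplied by the invariant. In the \emph{far} case, $a_{i+1}$ is at distance greater than $2\cdot 3^{r-(i+1)}$ from every $a_j$, so its radius-$3^{r-(i+1)}$ ball is disjoint from those of the old points, and the duplicator must find a \emph{fresh} point $b_{i+1}$ in $M_2$ whose radius-$3^{r-(i+1)}$ neighborhood is isomorphic to that of $a_{i+1}$ and which is likewise far from all $b_j$. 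Because the disjoint balls combine with the existing isomorphism into a single isomorphism of the full radius-$3^{r-(i+1)}$ neighborhoods of the extended tuples, the invariant is restored.

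The main obstacle is precisely the existence of this fresh far witness, and here the full strength of the \emph{bijective} Hanf equivalence $\leftrightarrows_{3^r}$ enters. Since $f\colon W_1\to W_2$ is a bijection preserving radius-$3^r$ neighborhood types and $3^r\ge 3^{r-(i+1)}$, each point's radius-$3^r$ type determines its radius-$3^{r-(i+1)}$ type, so the number of points realizing any fixed radius-$3^{r-(i+1)}$ type agrees in $M_1$ and $M_2$. Only the boundedly many points within distance $2\cdot 3^{r-(i+1)}$ of the previously chosen $b_j$ are forbidden; hence a witness of the required type survives (if the type occurs infinitely often it trivially persists, and if only finitely often the matching multiplicities guarantee survival after deleting the finitely many blocked points on each side). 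Making this counting argument precise, together with the triangle-inequality bookkeeping that lets the radii $3^{r-i}$ contract at exactly the rate needed for both cases to reconstitute the invariant, is the technical heart; the remaining EF-to-FO translation is routine.
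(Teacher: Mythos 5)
Your overall route is the right one, and it matches the classical argument that the paper merely cites: the paper gives no proof of this theorem of its own, pointing instead to Hanf (1965) with the presentation taken from Libkin's book, and your EF-game proof with the shrinking radii $3^{r-i}$ and the close/far case split is exactly that textbook proof. The problematic step is the far case, and in two respects. First, even for finite models your counting is incomplete as written: from ``the number of realizers of the type agrees in $M_1$ and $M_2$'' together with ``delete the blocked points on each side'' you cannot conclude that a free witness survives in $M_2$ unless you also know that the number of \emph{blocked} realizers of that type is the same on both sides. That is the missing, and essential, step; it follows from your own invariant: any realizer within distance $2\cdot 3^{r-(i+1)}$ of the tuple has its radius-$3^{r-(i+1)}$ ball contained in the radius-$3^{r-i}$ neighborhood of the tuple, so the invariant isomorphism matches blocked realizers of each type bijectively, and then $\#\mathrm{free}=\#\mathrm{total}-\#\mathrm{blocked}$ yields the witness. (A smaller slip in the same case: if ``far'' only means distance $>2\cdot 3^{r-(i+1)}$, the two balls are disjoint but may still be joined by an edge, so the glued map need not preserve $E$; the close case should cover distances up to $2\cdot 3^{r-(i+1)}+1$.)

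Second, and more seriously, the subtraction argument is valid only when these sets are finite, while the statement is asserted for arbitrary Kripke models. Your parenthetical ``if the type occurs infinitely often it trivially persists'' is false once balls can be infinite: all infinitely many realizers may lie in the blocked region. Indeed the theorem in this bijective formulation genuinely fails on infinite models: let $M_1$ be the complete loop-free digraph on countably many worlds and $M_2$ the disjoint union of two such digraphs. For every $d$, every point on either side has the same pointed $d$-neighborhood (namely its whole component), so $M_1\leftrightarrows_d M_2$ for all $d$, yet the first-order sentence $\exists x\exists y\,(x\neq y\wedge \neg E(x,y))$ separates them. Hence no argument can close what you call the ``technical heart'' in the stated generality: Hanf locality of this kind is a finite-model-theory result (or requires local finiteness or threshold-style hypotheses as in Hanf's original theorem), and your sketch is completable exactly in that restricted setting. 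This caveat is not only about your write-up; it also silently affects the paper's own application of the theorem to the infinite models, with worlds of possibly infinite degree, constructed in its proof that FO-expressible bisimulation-invariant team properties are local.
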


\end{appendix}

\end{document}